\newtheorem{theorem}{Theorem}
\newtheorem{corollary}{Corollary}
\newtheorem{proposition}{Proposition}
\theoremstyle{definition}
\newtheorem{definition}{Definition}
\newtheorem{example}{Example}
\theoremstyle{remark}
\newtheorem{remark}{Remark}
\newtheorem*{remark*}{Remark}
\def\eqns#1{\begin{equation*}#1\end{equation*}}
\def\eqnl#1#2{\begin{equation}\label{#1}#2\end{equation}}
\def\eqnsa#1{\begin{subequations}\begin{align*}#1\end{align*}\end{subequations}}
\def\eqnsml#1{\begin{multline*}#1\end{multline*}}
\def\T{\mathrm{T}}
\def\zero{\mathbf{0}}
\def\one{\mathbf{1}}
\def\boC{\mathbf{C}}
\def\boE{\mathbf{E}}
\def\boF{\mathbf{F}}
\def\boI{\mathbf{I}}
\def\bsm{\bm{m}}
\def\boM{\mathbf{M}}
\def\boL{\mathbf{L}}
\def\bsp{\bm{p}}
\def\bsP{\bm{P}}
\def\boX{\mathbf{X}}
\def\boY{\mathbf{Y}}
\def\calA{\mathcal{A}}
\def\calB{\mathcal{B}}
\def\calE{\mathcal{E}}
\def\calL{\mathcal{L}}
\def\calX{\mathcal{X}}
\def\bscalX{\bm{\mathcal{X}}}
\def\bbN{\mathbb{N}}
\def\bbP{\mathbb{P}}
\def\bbR{\mathbb{R}}
\def\d{\mathrm{d}}
\def\s{\mathrm{s}}
\DeclareMathOperator{\Mod}{mod}
\def\AND{\qquad\mbox{ and }\qquad}
\def\defeq{\doteq}
\def\ET{,\;\;}
\def\frall#1{(#1)\qquad}
\def\given{\,|\,}
\def\implies{\Rightarrow}
\def\ind#1{\one_{#1}}
\def\inteq{\stackrel{\smallint}{=}}
\def\scl{\dagger}
\def\sst{\,:\,} 
\def\Ssup#1{\|#1\|}
\def\ssup#1{\|#1\|}
\def\st{\;\,\mbox{s.t.}\;\,}
\def\fpstar{\odot}
\def\vphi{\varphi}
\title{Bayesian data assimilation based on a family of outer measures}
\author{Jeremie Houssineau and Daniel E.\ Clark
\thanks{J.\ Houssineau is with the Department of Statistics and Applied Probability, National University of Singapore.}
\thanks{D.E.\ Clark is with the School of Engineering and Physical Sciences, Heriot-Watt University, Edinburgh.}}
\begin{document}

\maketitle


\begin{abstract}
A flexible representation of uncertainty that remains within the standard framework of probabilistic measure theory is presented along with a study of its properties. This representation relies on a specific type of outer measure that is based on the measure of a supremum, hence combining additive and highly sub-additive components. It is shown that this type of outer measure enables the introduction of intuitive concepts such as pullback and general data assimilation operations.
\end{abstract}

\begin{IEEEkeywords}
Outer measure; Data assimilation
\end{IEEEkeywords}


\section*{Introduction}

Uncertainty can be considered as being inherent to all types of knowledge about any given physical system and needs to be quantified in order to obtain meaningful characteristics for the considered system. Uncertainty originates from randomness as well as from lack of precise knowledge and is usually modelled by a probability measure on a state space on which the system of interest can be represented. However, in situations where the lack of knowledge is acute, probability measures are not always suitable for representing the associated uncertainty, and different approaches have been introduced based on \emph{fuzzy sets} \cite{Zadeh1965}, \emph{upper and lower probabilities} \cite{Dempster1967} or \emph{possibilities} and \emph{plausibilities} \cite{Shafer1976}. Various extensions of the concept of plausibility have also been studied \citep{Yen1990,Friedman2001}.

In this article, we study a representation of uncertainty that is completely within the standard framework of probabilistic measure theory. We do not claim that the approach considered here is more general than any previously introduced representation of uncertainty, it is rather the considered framework that will prove to be beneficial when handling uncertainty. Indeed, the proposed approach allows for deriving a Bayesian data assimilation operation that combines independent sources of information. It is shown that this operation yields Dempster's rule of combination \citep{Dempster1968,Shafer1986} when the proposed representation of uncertainty reduces to a plausibility, which furthers the relation between the Bayesian and Dempster-Shafer approaches discussed in \cite{Brodzik2009}.
The mathematical properties of the proposed Bayesian data assimilation operation are also studied, similarly to \cite{Brodzik2009}. This operation can be used in conjunction with random variables describing different types of physical systems, for instance it has been used in \cite[Chapt.~2]{Houssineau2015_Thesis} together with a general representation of stochastic populations \cite{Houssineau2016} in order to derive a principled solution to the \emph{data association} problem which underlies data assimilation for probabilistic multi-object dynamical systems and which is usually handled in an ad-hoc way. This enabled new methods for multi-target filtering to be derived, as in \cite{Delande2016} and \cite[Chapt.~4]{Houssineau2015_Thesis}.

The proposed representation of uncertainty also enables the definition of a meaningful notion of \emph{pullback} -- as the inverse of the usual concept of pushforward measure -- that does not require a modification of the measurable space on which the associated function is defined, i.e.\ it does not require the considered measurable function to be made bi-measurable.

The structure of the paper is as follows. Examples of limitations encountered when using probability measures to represent uncertainty are given in Section~\ref{sec:uncertaintyAndProbabilityMeasure}. The concepts of outer measure and of convolution of measures are recalled in Section~\ref{sec:fundamentalConcepts}, followed by the introduction of the proposed representation of uncertainty in Section~\ref{sec:probabilisticConstraint} and examples of use in Section~\ref{sec:exampleProbaConstraint}. Preliminary results about pushforward and pullback measures are given in Section~\ref{sec:operationsOnMeasureConstraints} before the statement of the main theorems in Section~\ref{sec:fusion}.

\subsection*{Notations}

The sets of non-negative integers is denoted $\bbN$, and the set of real (non-negative) numbers is denoted $\bbR$ ($\bbR^+$). The power set of a set $A$ is denoted $\wp(A)$, where ``$\wp$'' is referred to as the ``Weierstrass p''. For any set $\boE$, the function on $\boE$ which is everywhere equal to one is denoted $\one$. For any set $\boF$, any $f,f' : \boE \to \boF$ and any $g,g':\boE \to \bbR$, we define the mappings $g \cdot g'$ and $f \ltimes f'$ on $\boE$ as
\eqnsa{
g \cdot g' & : x \mapsto g(x)g'(x) \in \bbR,\\
f \ltimes f' & : x \mapsto (f(x),f'(x)) \in \boF \times \boF,
}
and the mappings $g \rtimes g'$ and $f \times f'$ on $\boE \times \boE$ as
\eqnsa{
g \rtimes g' & : (x,x') \mapsto g(x)g'(x') \in \bbR,\\
f \times f' & : (x,x') \mapsto (f(x),f'(x')) \in \boF \times \boF.
}

We consider two complete probability spaces $(\Omega,\Sigma,\bbP)$ and $(\Omega',\Sigma',\bbP')$ that are assumed to represent independent sources of uncertainty. If two random variables $X$ and $X'$ are defined on $\Omega$, then the associated joint random variable is $X \ltimes X'$ and its law is found to be the pushforward $(X \ltimes X')_*\bbP$, moreover, if $X$ and $X'$ are real random variables then their product is $X \cdot X'$. If $X'$ is defined on $\Omega'$ then the corresponding joint random variable will be $X \times X'$ and its law will be $(X \times X')_*(\bbP \rtimes \bbP')$.

If $(\boE,\calE)$ is a measurable space then $\boM(\boE)$ and $\boM_1(\boE)$ denote the set of measures and the set of probability measures on $\boE$ respectively. If $\boX$ is a Polish space, then $\calB(\boX)$ denote the Borel $\sigma$-algebra on $\boX$.

\section{Uncertainty and probability measures}
\label{sec:uncertaintyAndProbabilityMeasure}

Random experiments are mathematically defined by probability spaces, defining possible events and attributing them a probability. However, the use of a probability space for representing uncertainty that is not induced by pure randomness, i.e.\ for describing random or non-random experiments about which only partial knowledge is available, can become less intuitive as well as technically challenging in some cases. 
In the following examples, we study when uncertainty can be encoded into a probability measure as well as cases where the existence of a procedure for doing so becomes less clear.

\begin{example}
Let $X$ be a random variable from $(\Omega,\Sigma,\bbP)$ to $(\bbR,\calB(\bbR))$ and assume that it is only known that $X$ has its image in the Borel subset $A$ in $\calB(\bbR)$ with probability $\alpha \in [0,1]$. This information can be encoded via the sub-$\sigma$-algebra $\calA \defeq \{\emptyset, A, A^c, \bbR\}$ of $\bbR$ by defining the law $p$ of $X$ on $(\bbR,\calA)$ rather than on $(\bbR,\calB(\bbR))$ and by setting $p(A) = \alpha$. Similarly, if nothing is known about $X$, then this can be encoded via the \emph{trivial} sub-$\sigma$-algebra $\{\emptyset, \bbR\}$.
\end{example}

The concept of sub-$\sigma$-algebra is useful for describing different levels of knowledge, such as when used for conditional expectations \citep[Chapt.~27]{Loeve1978}. However, we will see in the next example that their use can become challenging in some situations.

\begin{example}
Let $p$ be a probability measure on $(\boE,\calE)$ and let $p'$ be another probability measure on $(\boE,\calE')$, with $\calE' \subset \calE$, then for any scalar $a \in (0,1)$, the probability measure ${q_a = (1-a) p + a p'}$ can only be defined on the coarsest $\sigma$-algebra, that is $\calE'$. When considering the extreme case where $\calE'$ is the trivial $\sigma$-algebra $\{\emptyset,\boE\}$, it results that nothing is known about $q_a$, however small is~$a$. One way to bypass this drawback is to single out a finite reference measure $\lambda$ in $\boM(\boE)$ and to define an extended version of $p'$ denoted $\bar{p}'$ as a \emph{uniform} probability measure as follows
\eqns{
\frall{\forall A \in \calE}  \bar{p}'(A) \defeq \dfrac{\lambda(A)}{\lambda(\boE)}.
}
In this way, the probability of a given event in $\calE$ is equal to the probability of any other event of the same ``size'' with respect to the measure\ $\lambda$. In other words, no area of the space $\boE$ is preferred over any other. Besides the facts that a reference measure is required and that the size of the space is limited (there is no uniform measure over the whole real line with respect to the Lebesgue measure), this way of modelling the information is not completely equivalent to the absence of information. There exist ways of modelling uncertainty on a probability measure itself, such as with Dirichlet processes \citep{Ferguson1973} and to some extent with Wishart distributions \citep{Wishart1928}; yet, these solutions do not directly help with the non-informative case since they require additional parameters to be set up.
\end{example}

\begin{example}
\label{ex:HMM}
Consider a discrete-time joint process $(X_n,Y_n)_{n\geq 0}$ assumed to be a hidden Markov model \cite{Baum1966} and consider the case where realisations $y_n$ of the observation process are not received directly but are instead known to be in some subset $A_n$ of the observation space. This is in fact what happens in practice since most real sensors are finite-resolution, e.g.\ if the sensor is a camera then $A_n$ would be one of the pixels in the received image at time step $n$. This type of data could be treated as such using non-linear filtering methods, as in \cite{Houssineau2015} for finite-resolution radars; however the associated uncertainty is often modelled by a Gaussian distribution so that the Kalman filter can be used instead. Yet, replacing information of the form $y_n \in A_n$ by a probability distribution can have a great impact on the value of the denominator in Bayes' theorem. Although this is unimportant in Kalman filtering since the numerator is equally affected, the value of the denominator does matter in different contexts, such as in multi-target tracking \cite{Blackman1986} where it is used as a weighting coefficient for the relative importance of a given \emph{track} when compared to others \cite{Pace2013}. We will see in Section~\ref{sec:fusion} that the Gaussian distribution can be replaced by another object that better represents data of the form $y_n \in A_n$ while preserving the advantages of Kalman filtering.
\end{example}

Another important aspect of probability theory is the gap between probability measures on countable and uncountable sets, as explained in the following example.

\begin{example}
Let $\boX$ be a countable set equipped with its discrete $\sigma$-algebra and assume that some physical system can be uniquely characterised by its state in $\boX$. Let $X$ be a random variable on $(\Omega,\Sigma,\bbP)$ and $X'$ be another random variable on another probability space $(\Omega',\Sigma',\bbP')$ and assume that both $X$ and $X'$ represent some uncertainty about the same physical system. The fact that the two random variables represent the same system can be formalised by the event $\Delta \defeq \{(x,x) \st x \in \boX\}$, which is the diagonal of $\boX \times \boX$. The information contained in the laws $p \defeq X_*\bbP$ and $p' \defeq (X')_* \bbP'$ can then be \emph{combined} into a conditional probability measure $\hat{p}(\cdot \given \Delta) \in \boM_1(\boX)$, characterised by
\eqns{
\frall{\forall B \subseteq \boX} \hat{p}(B \given \Delta) \defeq \dfrac{p \rtimes p'(B \times B \cap \Delta)}{p \rtimes p'(\Delta)},
}
where $p$ and $p'$ are assumed to be \emph{compatible}, i.e.\ that $p \rtimes p'(\Delta) \neq 0$. This result is justified by the transformation of the conditional probability measure $p\rtimes p'( \cdot \given \Delta)$ with support on $\Delta$ into a probability measure on $\boX$ (isomorphism $\Mod 0$). Let $w,w' : \boX \to [0,1]$ be the probability mass functions induced by $p$ and $p'$ and characterised by
\eqns{
p = \sum_{x \in \boX} w(x) \delta_{x}, \AND p' = \sum_{x \in \boX} w'(x) \delta_{x},
}
then the probability measure $\hat{p}(\cdot \given \Delta)$ can be more naturally characterised via its probability mass function $\hat{w}$ on $\boX$, which is found to be
\eqns{
\hat{w} : x \mapsto \dfrac{w(x) w'(x)}{\sum_{y \in \boX} w(y) w'(y)},
}
However, if $\boX$ is uncountable and equipped with its Borel $\sigma$-algebra $\calB(\boX)$ and if the probability measures $p$ and $p'$ are diffuse on $\boX$, then they will not be compatible by construction. Indeed, even though the diagonal $\Delta$ can still be defined and is measurable under extremely weak conditions on $\boX$, it holds that $p \rtimes p'(\Delta) = 0$ (more specifically, the diagonal $\Delta$ of $\boX\times\boX$ is measurable in a separable metric space \cite[Lemma~6.4.2]{Bogachev2007}, and remains measurable if $\boX$ is generalised to a Hausdorff topological space with a countable base; an interesting result from \cite{Nedoma1957}, detailed in \cite[Chapt.~21]{Schechter1996}, is that the diagonal $\Delta$ is never measurable when the cardinality of $\boX$ is strictly larger than the cardinality of the continuum.) The fact that $p \rtimes p'(\Delta) = 0$ is caused by the strong assumption that the probabilities $p(B)$ and $p'(B)$ are known for all measurable subsets in $\calB(\boX)$ and, because $p$ and $p'$ are diffuse, tend to zero when $B$ reduces to a singleton. The introduction of an appropriately coarse sub-$\sigma$-algebra on $\boX$, such as the one generated by a given countable partition, would allow for recovering some of the results that hold for countable spaces. However, such an approach will not be natural or intuitive in most of the situations. Alternatively, if $p$ and $p'$ are absolutely continuous with respect to a reference measure $\lambda \in \boM(\boX)$, then the probability density $\hat{f}$ of $\hat{p}(\cdot \given \Delta)$ can be expressed as a function of the probability densities $f$ and $f'$ of $p$ and $p'$ respectively as
\eqns{
\hat{f} : x \mapsto \dfrac{f(x)f'(x)}{\int f(y)f'(y) \lambda(\d y)}.
}
However, to resort to this approach amounts to ignoring the incompatibility between the two random variables, in addition to the loss of meaningful interpretation of the denominator which becomes dependent on the choice of the reference measure. In a filtering context, one of the laws, say $p$, would be the prior whereas $p'$ would represent the observation. Although the prior might be appropriately represented by a probability measure, the uncertainty in the observation is often mostly due to lack of knowledge for which a probability measure might be ill-suited.
\end{example}

\begin{example}
Attributing a probability measure to data originated from natural language is often inappropriate. For instance, if an observer locates an object in the real world \emph{around} position $x \in \bbR^3$, then representing this data with a distribution on $\bbR^3$ centred on $x$, such as a Gaussian distribution with a given variance, highly overstates the given information. The observer is not giving the exact probability for the object to be in any measurable subset of $\bbR^3$; he only gives the fact that the probability should be low for subsets that are far from $x$ in a given sense and high when close to or when containing~$x$. 
\end{example}

Overall, there is a need for the introduction of additional concepts that could account for these non-informative types of knowledge, and which would in turn enable data assimilation to be performed in more general spaces. The objective in the next section is to find an alternative way of representing uncertainty while staying in the standard formalism of measure and probability theory.

\section{Fundamental concepts}
\label{sec:fundamentalConcepts}

We first recall two concepts of measure theory that will be useful in this article, namely the concepts of outer measure and of convolution of measures.

\subsection{Outer measure}
\label{sec:outerMeasure}

The concept of \emph{outer measure} is fundamental in measure theory and is defined as follows.

\begin{definition}
An outer measure on $\boE$ is a function $\mu : \wp(\boE) \to [0,\infty]$ verifying the following conditions:
\begin{enumerate}[label=\itshape\alph*\upshape)]
\item (Null empty set) $\mu(\emptyset) = 0$
\item (Monotonicity) if $A \subseteq B \subseteq \boE$ then $\mu(A) \leq \mu(B)$
\item (Countable sub-additivity) for every sequence $(A_n)_{n \in \bbN}$ of subsets of $\boE$
\eqns{
\mu\bigg(\bigcup_{n \in \bbN} A_n\bigg) \leq \sum_{n \in \bbN} \mu(A_n).
}
\end{enumerate}
\end{definition}

Outer measures allow for constructing both $\sigma$-algebras and measures on them via Carath\'eodory's method \cite[Sect.~113]{Fremlin2000}: the outer measure $\mu$ induces a $\sigma$-algebra $\calX$ of subsets of $\boX$ composed of sets $A$ verifying
\eqns{
\frall{\forall C \subseteq \boX} \mu(C) = \mu(C \cap A) + \mu(C \cap A^c),
}
which are referred to as \emph{$\mu$-measurable sets}. The measure space $(\boX,\calX,\mu)$ is a complete measure space. A classical example of measures constructed in this way is the Lebesgue measure on the $\sigma$-algebra of Lebesgue measurable subsets of $\bbR$. If $\boF$ is a set, $\mu$ and $\mu'$ are outer measures on $\boE$ and $\boF$ respectively and $f : \boE \to \boF$ is a function, then
\eqns{
C \subseteq \boF \mapsto \mu(f^{-1}[C]) \AND A \subseteq \boE \mapsto \mu'(f[A])
}
are outer measures. One way of constructing outer measures from measures is as follows: let $(\boX,\calX,m)$ be a measure space, then $m$ induces an outer measure $m^*$ on $\boX$ defined as
\eqnl{eq:prop:probabilisticConstraint:inducedOuterMeas}{
m^*(C) = \inf\{ \mu(A) \st A \in \calX \ET A \supseteq C \},
}
for any $C \subseteq \boX$.

\subsection{Convolution of measures}
\label{ssec:convolution}

An operation that will prove to be of importance is the operation of convolution of measures. We consider that the set $\boX$ is a Polish space equipped with its Borel $\sigma$-algebra $\calB(\boX)$. 

\begin{definition}
\label{def:convolution}
Let $m$ and $m'$ be two finite measures on a topological semigroup $(\boX,\cdot)$, then the convolution $m * m'$ of $m$ and $m'$ is defined as
\eqnl{eq:def:convolution}{
m * m' (A) = m \rtimes m'(\{ (x,y) \st x \cdot y \in A \}),
}
for any $A \in \calB(\boX)$.
\end{definition}

The set function $m * m'$ defined in \eqref{eq:def:convolution} is a measure on $(\boX,\calB(\boX))$. The following properties of the operation of convolution are corollaries of \cite[Sect.~444A-D]{Fremlin2000}.

\begin{corollary}
If $(\boX,\cdot)$ is a topological semigroup, then it holds that
\eqns{
m * (m' * m'') = (m * m') * m''
}
for all finite measures $m$, $m'$ and $m''$ on $\boX$. If $(\boX,\cdot)$ is commutative, then it holds that $m * m' = m' * m$ for all finite measures $m$ and $m'$ on $\boX$.
\end{corollary}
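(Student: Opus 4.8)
The plan is to rewrite each side of the two claimed identities as an integral against a product measure on $\boX^2$ or $\boX^3$, and then to read off the equality from the corresponding algebraic property of the semigroup. Since $(\boX,\cdot)$ is a topological semigroup the multiplication map $\kappa\colon\boX\times\boX\to\boX$, $\kappa(x,y)=x\cdot y$, is continuous, hence Borel measurable, and since $\boX$ is Polish (in particular second countable) one has $\calB(\boX\times\boX)=\calB(\boX)\otimes\calB(\boX)$, so that the set $\{(x,y):x\cdot y\in B\}=\kappa^{-1}[B]$ appearing in Definition~\ref{def:convolution} is indeed measurable for every $B\in\calB(\boX)$ and
\eqnsml{
(m*m')(B) = (m\rtimes m')\big(\{(x,y):x\cdot y\in B\}\big) \\ = \int_{\boX}\int_{\boX}\ind{x\cdot y\in B}\,m(\d x)\,m'(\d y),
}
where the second equality is Fubini's theorem, legitimate because $m$ and $m'$ are finite. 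The same remarks apply to triple products: $m\rtimes m'\rtimes m''$ is an unambiguous finite measure on $\boX^3$ (associativity of products of finite measures) and Fubini holds for it.

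For associativity, fix $A\in\calB(\boX)$. For each fixed $z\in\boX$ the set $\{u:u\cdot z\in A\}$ is Borel, so the displayed identity gives $(m*m')(\{u:u\cdot z\in A\})=\int_{\boX}\int_{\boX}\ind{(x\cdot y)\cdot z\in A}\,m(\d x)\,m'(\d y)$. Applying the definition of convolution to the pair $(m*m',m'')$ and then Fubini, followed by Fubini on $\boX^3$, yields
\eqns{
(m*m')*m''(A) = \int_{\boX^3}\ind{(x\cdot y)\cdot z\in A}\,(m\rtimes m'\rtimes m'')(\d(x,y,z)).
}
The symmetric computation gives $m*(m'*m'')(A)=\int_{\boX^3}\ind{x\cdot(y\cdot z)\in A}\,(m\rtimes m'\rtimes m'')(\d(x,y,z))$. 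Associativity of $\cdot$ makes the two integrands coincide pointwise, so the two integrals agree for every $A$, which is precisely $m*(m'*m'')=(m*m')*m''$.

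For commutativity, one has $m'*m(A)=(m'\rtimes m)(\{(y,x):y\cdot x\in A\})$, and I would transport this through the coordinate-swap homeomorphism $\tau\colon(x,y)\mapsto(y,x)$, using the standard fact that $\tau_*(m\rtimes m')=m'\rtimes m$ (both sides agree on measurable rectangles, and a finite product measure is determined by its values there). This gives $m'*m(A)=(m\rtimes m')(\{(x,y):y\cdot x\in A\})$, and since $y\cdot x=x\cdot y$ the set on the right is $\{(x,y):x\cdot y\in A\}$, so the right-hand side equals $m*m'(A)$.

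I do not expect a genuine obstacle: the only points requiring care are the measurability of $\kappa$ and of the parametrised maps $u\mapsto u\cdot z$, the identity $\calB(\boX\times\boX)=\calB(\boX)\otimes\calB(\boX)$ (this is where the Polish hypothesis enters), and the repeated appeals to Fubini, all valid since every measure in sight is finite. Once these are in place the argument is purely formal and merely transports the algebraic associativity and commutativity of the semigroup through the pushforward and the product measure, which is exactly why the statement is a corollary of \cite[Sect.~444A-D]{Fremlin2000}.
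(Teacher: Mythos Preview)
Your argument is correct. The paper does not actually supply a proof of this corollary: it simply records the statement as an immediate consequence of \cite[Sect.~444A--D]{Fremlin2000} and moves on. What you have written is the standard direct verification that underlies that reference --- push the convolution forward to an iterated integral via Fubini, then read off associativity and commutativity from the corresponding algebraic identity in the semigroup. So your route is not genuinely different from the paper's; it is rather an explicit unpacking of the citation, and your care with the measurability points (continuity of $\kappa$, the identity $\calB(\boX\times\boX)=\calB(\boX)\otimes\calB(\boX)$ for Polish $\boX$, finiteness for Fubini) is exactly what is needed to make the formal computation rigorous.
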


\section{Measure constraint}
\label{sec:probabilisticConstraint}

Henceforth, we consider a space $\boX$ which is assumed to be a Polish space. We denote $\boL^0(\boX,\bbR)$ the set of all measurable functions from $(\boX,\calB(\boX))$ to $(\bbR,\calB(\bbR))$ and we use the shorthand notation $\boL^0(\boX) \defeq \boL^0(\boX,\bbR^+)$ for the subset of $\boL^0(\boX,\bbR)$ made of non-negative functions. The general definition of measure constraint is given below, followed by the introduction of more specific properties and operations.

\subsection{Definition of measure constraint}

Using the notion of outer measure defined in the previous section as well as the technical results about the set $\boL^0(\boX,\bbR)$ detailed in \cite[Sect.~245]{Fremlin2000}, we introduce the concept of \emph{measure constraint} as follows.

\begin{definition}
\label{def:constraint}
Let $M$ be a measure on $\boL^0(\boX,\bbR)$, if it holds that the function~$\mu_M$ defined on the power set $\wp(\boX)$ of $\boX$ as
\eqnl{eq:outerMeasureConstraint}{
\mu_M : A \mapsto M(\chi(A,\cdot))
}
is an outer measure for a given collection of measurable functions $\{\chi(A,\cdot)\}_{A \subseteq \boX}$ on $\boL^0(\boX,\bbR)$, then $M$ is said to be a \emph{measure constraint} on $\boX$ with \emph{characteristic function}~$\chi$. If $m$ is a finite measure on $\boX$ verifying
\eqnl{eq:def:constraint}{
m(B) \leq M(\chi(B,\cdot))
}
for any $B \in \calB(\boX)$ and
\eqnl{eq:def:constraint:cond}{
m(\boX) = M(\chi(\boX,\cdot)),
}
then $M$ is said to be dominating $m$.
\end{definition}

The motivation behind the introduction of measure constraints is to partially describe a measure by limiting the mass in some areas while possibly leaving it unconstrained elsewhere. Measure constraints that would bound a measure from below could also be defined using the associated concept of \emph{inner measure}. In general, a measure on $\boX$ could be dominated by a measure on the set $\boL^0(\boY,\bbR)$ of measurable functions on a different set $\boY$, as long as the associated characteristic function $\chi$ is defined accordingly.

\begin{remark}
In Definition~\ref{def:constraint}, the condition that $\mu_M$ is an outer measure is used to reduce the set of measures on $\boL^0(\boX,\bbR)$ that would verify \eqref{eq:def:constraint} to the ones that have natural properties. As explained by \cite[Sect.~113B]{Fremlin2000}: ``The idea of the \emph{outer} measure of a set $A$ is that it should be some kind of upper bound for the possible measure of $A$''. In fact, the use of outer measures as a way of dealing with uncertainty has first been proposed by \cite{Fagin1990}. In particular, the condition of monotonocity imposes that if a given mass is allowed in a set $A$ then at least the same mass should be allowed in a larger set $B \supseteq A$. Similarly, the condition of sub-additivity allows for reaching the maximum mass $m(\boX)$ in several disjoint sets while still verifying $M(\chi(\boX,\cdot)) = m(\boX)$.
\end{remark}

\begin{remark}
A direct consequence of \eqref{eq:def:constraint} is that a lower bound for $m(B)$ is also available for any $B \in \calB(\boX)$ and is found to be
\eqnl{eq:inducedLowerBound}{
m(B) = m(\boX) - m(B^c) \geq m(\boX) - M(\chi(B^c, \cdot)).
}
The information provided by this lower bound is limited since $\mu_M$ is sub-additive and might reach $m(\boX)$ on any given set $B' \subset \boX$. In this case, \eqref{eq:inducedLowerBound} only implies that $m(B') \geq 0$ which is not informative.
\end{remark}

We will be particularly interested in the situation where $m$ is a probability measure, in which case $M$ satisfies $M(\chi(\boX,\cdot)) = m(\boX) = 1$ and is said to be a \emph{probabilistic constraint}. The advantage with condition \eqref{eq:def:constraint:cond} is that if $m$ is a finite measure that is not a probability measure, then $m$ and $M$ can be renormalised to be respectively a probability measure and a probabilistic constraint by dividing the inequality \eqref{eq:def:constraint} by the total mass $m(\boX)$. When uncertainty is induced by a lack of knowledge on the actual law of a given random experiment then this law should be dominated by the considered probabilistic constraint. However, in cases of uncertain but non-random experiments, there is no such thing as a ``true'' law and the probabilistic constraint becomes the primary representation of uncertainty.

A useful case is found when $M$ is supported by the set $\boL^{\infty}(\boX)$ of non-negative bounded measurable functions and, assuming that $\sup : \boL^{\infty}(\boX) \to \bbR$ is measurable, when $\chi$ is such that
\eqnl{eq:characteristicSimple}{
\chi : (A,f) \mapsto \| \ind{A} \cdot f \|,
}
where $\ind{A}$ denotes the indicator of $A$ and where $\| \cdot \|$ is the uniform norm on $\boL^{\infty}(\boX)$. All outer measures do not take the form assumed in \eqref{eq:outerMeasureConstraint} with $\chi$ as in \eqref{eq:characteristicSimple}, but this case offers suitably varied configurations by combining a linear part and a very sub-additive part, that is the measure by $M$ and the uniform norm respectively. This case will be understood as the default situation in the sense that \eqref{eq:characteristicSimple} will be considered when the characteristic function of a measure constraint is not specified. The subset of probabilistic constraints with such a supremum-based characteristic function is denoted $\boC_1(\boX)$ and we consider the weighted semi-norm $\| \cdot \|$ defined on this set as
\eqns{
\| \cdot \| : M \mapsto \int \| f \| M(\d f).
}
This semi-norm is not a norm since $\| M \| = 0$ does not imply that $M$ is the null measure in general.

\subsection{Properties of measure constraints}

Henceforth, $\calL^{\infty}(\boX)$ will denote the Borel $\sigma$-algebra induced on $\boL^{\infty}(\boX)$ by the topology of convergence in measure on $\boL^0(\boX)$ presented in \cite[Sect.~245]{Fremlin2000}. The characterisation of $\calL^{\infty}(\boX)$ is highly technical and out of the scope of this article so that the measurability of the function $f \mapsto \sup f$ and the mapping $f \mapsto f \circ \xi$, for some measurable mapping $\xi$ in $\boX$, is assumed rather than demonstrated.

Given the definition of $\boC_1(\boX)$, it appears that many different measures in this set will dominate exactly the same probability measures in $\boM_1(\boX)$ since a rescaling of the functions in the support can be compensated for by a rescaling of the measure itself. However such a multiplicity can be avoided by identifying a canonical form and a rescaling procedure as in the following proposition.

\begin{proposition}
\label{prop:probaConstraintGeneralRemark:rescaling}
For any probabilistic constraint $M \in \boC_1(\boX)$, there exists a probabilistic constraint $M^{\scl} \in \boC_1(\boX)$ that is equivalent to $M$ and is also a probability measure, which can be determined via the following rescaling procedure:
\eqnl{eq:rescaleConstraint}{
\frall{\forall F \in \calL^{\infty}(\boX)} M^{\scl}(F) \defeq \int \ind{F}(f^{\scl}) \Ssup{f} M(\d f),
}
where the function $f^{\scl} \in \boL^{\infty}(\boX)$ is defined as
\eqns{
f^{\scl} \defeq 
\begin{cases*}
\dfrac{f}{\Ssup{f}} & if $\Ssup{f} \neq 0$ \\
\one & otherwise.
\end{cases*}
}
\end{proposition}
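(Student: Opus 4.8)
The plan is to read the right-hand side of \eqref{eq:rescaleConstraint} as a pushforward measure and to reduce the equivalence claim to the positive homogeneity of the uniform norm. First I would introduce the finite measure $\nu(\d f) \defeq \Ssup{f}\, M(\d f)$ on $(\boL^{\infty}(\boX),\calL^{\infty}(\boX))$. Since $M \in \boC_1(\boX)$ is by definition a probabilistic constraint with the characteristic function \eqref{eq:characteristicSimple}, and since $\ind{\boX} = \one$, its total mass is $\nu(\boL^{\infty}(\boX)) = \int \Ssup{f}\, M(\d f) = M(\chi(\boX,\cdot)) = 1$ by \eqref{eq:def:constraint:cond}, so $\nu$ is a probability measure. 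Writing $\sigma : f \mapsto f^{\scl}$, formula \eqref{eq:rescaleConstraint} states precisely that $M^{\scl} = \sigma_{*}\nu$; since $f^{\scl} \in \boL^{\infty}(\boX)$ for every $f$, the pushforward $M^{\scl}$ is a probability measure carried by $\boL^{\infty}(\boX)$, which already settles the ``also a probability measure'' part of the statement.

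I would then compute the outer measure induced by $M^{\scl}$ and show it coincides with $\mu_{M}$. For an arbitrary $A \subseteq \boX$, the transfer formula for pushforwards gives
\[
\mu_{M^{\scl}}(A) = \int \Ssup{\ind{A}\cdot g}\, M^{\scl}(\d g) = \int \Ssup{\ind{A}\cdot f^{\scl}}\,\Ssup{f}\, M(\d f).
\]
The key point is the pointwise identity $\Ssup{\ind{A}\cdot f^{\scl}}\,\Ssup{f} = \Ssup{\ind{A}\cdot f}$, valid for every $f \in \boL^{\infty}(\boX)$: when $\Ssup{f}\neq 0$ it is the positive homogeneity of $\Ssup{\cdot}$ applied with the scalar $\Ssup{f}^{-1}$, and when $\Ssup{f}=0$ the bound $0 \le \ind{A}\cdot f \le f$ forces $\Ssup{\ind{A}\cdot f}=0$, so both sides vanish. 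Substituting this identity yields $\mu_{M^{\scl}}(A) = \int \Ssup{\ind{A}\cdot f}\, M(\d f) = \mu_{M}(A)$, hence $\mu_{M^{\scl}} = \mu_{M}$ on all of $\wp(\boX)$. Consequently $\mu_{M^{\scl}}$ is an outer measure with $\mu_{M^{\scl}}(\boX)=1$; a finite measure on $\boX$ is dominated by $M^{\scl}$ (in the sense of \eqref{eq:def:constraint}--\eqref{eq:def:constraint:cond}) if and only if it is dominated by $M$, because the two constraints share the characteristic function \eqref{eq:characteristicSimple} and $M^{\scl}(\chi(B,\cdot)) = \mu_{M^{\scl}}(B) = \mu_{M}(B) = M(\chi(B,\cdot))$ for every $B$. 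Therefore $M^{\scl} \in \boC_1(\boX)$ and $M^{\scl}$ is equivalent to $M$, which together with the first paragraph proves the proposition modulo measurability.

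The only genuine obstacle is the $\calL^{\infty}(\boX)$-measurability of $\sigma : f \mapsto f^{\scl}$, but this falls under the paper's standing measurability assumptions. Indeed $f \mapsto \Ssup{f}$ is assumed measurable, hence $\{f \st \Ssup{f}\neq 0\}$ is measurable; on its complement $\sigma$ is the constant map $\one$, and on $\{f \st \Ssup{f}\neq 0\}$ one has $\sigma(f)=\Ssup{f}^{-1}f$, which is measurable because scalar multiplication $(c,f)\mapsto cf$ is continuous --- hence Borel --- for the topology of convergence in measure, while $f\mapsto\Ssup{f}^{-1}$ is a measurable scalar; patching the two pieces gives the measurability of $\sigma$. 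The same continuity remark also makes $g\mapsto\Ssup{\ind{A}\cdot g}$ measurable, so the transfer formula above is applied legitimately, and everything else reduces to a routine use of the pushforward theorem.
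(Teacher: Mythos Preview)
Your argument is correct and follows essentially the same route as the paper: introduce the weighted measure $\Ssup{f}\,M(\d f)$, recognise $M^{\scl}$ as its pushforward by $f\mapsto f^{\scl}$, and use the homogeneity identity $\Ssup{\ind{A}\cdot f^{\scl}}\,\Ssup{f}=\Ssup{\ind{A}\cdot f}$ to match the induced outer measures. Your version is in fact a little tidier than the paper's --- by proving $\mu_{M^{\scl}}=\mu_{M}$ directly you get both directions of the equivalence at once (the paper only writes out one implication), and you make the $\Ssup{f}=0$ case and the measurability of $\sigma$ explicit where the paper leaves them implicit.
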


\begin{proof}
By construction, it holds that the support of $M^{\scl}$ is included in the subset of $\boL^{\infty}(\boX)$ made of functions with uniform norm equal to one, so that $M^{\scl}(\boL^{\infty}(\boX)) = \| M^{\scl} \| = 1$. The measure $M^{\scl}$ is then both a probability measure and a probabilistic constraint. We now have to show that $M$ and $M^{\dagger}$ dominate the same probability measures: Let $p \in \boM_1(\boX)$ be a probability measure dominated by $M$, then
\eqns{
m(B) \leq \int \| \ind{B} \cdot f \| M(\d f) = \int \| \ind{B} \cdot f^{\scl} \| M'(\d f),
}
for all $B \in \calB(\boX)$, where $M'(\d f) \inteq \| f \| M(\d f)$ so that, by a change of variable,
\eqns{
m(B) \leq \int \| \ind{B} \cdot f^{\scl} \| M'(\d f) = \int \| \ind{B} \cdot f' \| M^{\scl}(\d f'),
}
which terminates the proof.
\end{proof}

\begin{remark}
The definition of $f^{\scl}$ when $\| f \| = 0$ is irrelevant because of the form of \eqref{eq:rescaleConstraint}. Yet, considering $f^{\scl} = \one$ when $\| f \|= 0$ implies that $f^{\scl}$ is always in the subset
\eqns{
\boL(\boX) \defeq \{f \in \boL^{\infty}(\boX) \st \|f\| = 1\}
}
of measurable functions with uniform norm equal to one. Probabilistic constraints in the set $\boC^*_1(\boX) \defeq \boM_1(\boL(\boX))$ will be referred to as \emph{canonical probabilistic constraints}.
\end{remark}

The unary operation of Proposition~\ref{prop:probaConstraintGeneralRemark:rescaling} does not affect the norm, i.e.\ the equality $\|M^{\scl}\| = \| M \|$ holds by construction for any $M$ in $\boC_1(\boX)$. It is also \emph{idempotent} and it distributes over the product $\rtimes$, that is $(M^{\scl})^{\scl} = M^{\scl}$ and $(M \rtimes M')^{\scl} = M^{\scl} \rtimes M'^{\scl}$ hold for any $M, M' \in \boC_1(\boX)$. Moreover, any canonical probabilistic constraint $P \in \boC^*_1(\boX)$ verifies $P^{\scl} = P$.

\begin{remark}
If $M \in \boC_1(\boX)$ is a probabilistic constraint and if $M^{\scl}$ is supported by the set $\boI(\boX)$ of measurable indicator functions, then the conditions \eqref{eq:def:constraint} and \eqref{eq:def:constraint:cond} can be replaced by ``$m$ agrees with the measure induced by $\mu_M$'', i.e.\ $m(B) = \mu_M(B)$ for any $B$ in the $\sigma$-algebra of $\mu_M$-measurable subsets.
\end{remark}

\begin{proposition}
\label{prop:probaConstraintGeneralRemark:joint} If $X,X' \in \boL^0(\Omega,\boX)$ are two independent random variables on $\boX$ with respective laws $p$ and $p'$ and if $M$ and $M'$ are probabilistic constraints for $p$ and $p'$ respectively, then the joint law $p \rtimes p' \in \boM_1(\boX)$ verifies
\eqns{
p \rtimes p'(\hat{B}) \leq \int \| \ind{\hat{B}} \cdot (f \rtimes f') \| M(\d f) M'(\d f')
}
for any $\hat{B} \in \calB(\boX) \otimes \calB(\boX)$.
\end{proposition}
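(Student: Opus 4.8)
The plan is to reduce the joint statement to the two marginal constraints by exploiting the product structure of both the measure $p \rtimes p'$ on $\boX \times \boX$ and the product constraint $M \rtimes M'$ on $\boL^\infty(\boX) \times \boL^\infty(\boX)$, using the elementary observation that $f \rtimes f'$ factorises as a product over the two coordinates. First I would reduce to product sets: since the measures on both sides of the claimed inequality are (Borel) measures on $\calB(\boX) \otimes \calB(\boX)$, and product sets $B \times B'$ with $B, B' \in \calB(\boX)$ generate this $\sigma$-algebra and form a $\pi$-system, it suffices to verify the bound for $\hat B = B \times B'$ and then appeal to monotone-class / Carathéodory-type extension — but here one must be careful, because the right-hand side is only sub-additive in $\hat B$, not a measure, so the extension argument does not literally apply. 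Instead I would argue directly: for a product set, $\ind{B \times B'} \cdot (f \rtimes f') = (\ind{B} \cdot f) \rtimes (\ind{B'} \cdot f')$, hence $\| \ind{B \times B'} \cdot (f \rtimes f') \| = \| \ind{B} \cdot f \| \cdot \| \ind{B'} \cdot f' \|$ since the supremum of a product of non-negative functions on a product space is the product of the suprema. Then by Tonelli's theorem the right-hand integral factorises as $\big( \int \| \ind{B} \cdot f \| M(\d f) \big)\big( \int \| \ind{B'} \cdot f' \| M'(\d f') \big) \geq p(B) p'(B') = p \rtimes p'(B \times B')$, using Definition~\ref{def:constraint} twice.

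To pass from product sets to arbitrary $\hat B \in \calB(\boX) \otimes \calB(\boX)$, I would use the sub-additivity and monotonicity of the set function $\hat B \mapsto \mu_{M \rtimes M'}(\hat B) \defeq \int \| \ind{\hat B} \cdot (f \rtimes f') \| M(\d f) M'(\d f')$, which is itself an outer measure on $\boX \times \boX$ — indeed this is exactly $\mu_{M \rtimes M'}$ evaluated with the default supremum-based characteristic function, and $M \rtimes M' \in \boC_1(\boX \times \boX)$ by the product remark following Proposition~\ref{prop:probaConstraintGeneralRemark:rescaling}. Since $p \rtimes p'$ is a measure and $\mu_{M \rtimes M'}$ is an outer measure, and they satisfy $p \rtimes p'(B \times B') \leq \mu_{M \rtimes M'}(B \times B')$ on the generating $\pi$-system, together with the total-mass equality $p \rtimes p'(\boX \times \boX) = 1 = \mu_{M \rtimes M'}(\boX \times \boX)$, the domination extends to all of $\calB(\boX) \otimes \calB(\boX)$: for any $\hat B$ and any product set $B \times B' \supseteq \hat B$ one has $p \rtimes p'(\hat B) \leq p \rtimes p'(B \times B') \leq \mu_{M \rtimes M'}(B \times B')$, but this bound is too weak, so instead I would cover $\hat B$ by a countable union of product sets, apply sub-additivity of $\mu_{M \rtimes M'}$ and countable additivity of $p \rtimes p'$, and take an infimum over such covers — this is precisely the standard argument that the product pre-measure's outer-measure extension dominates $p \rtimes p'$.

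The main obstacle is this last extension step: unlike in Definition~\ref{def:constraint} where one is handed the constraint, here one must check that the particular outer measure $\mu_{M \rtimes M'}$ dominates the particular measure $p \rtimes p'$ on the full product $\sigma$-algebra, and the right-hand side's failure to be additive means one cannot simply invoke uniqueness of measure extension. The clean way around it is to observe that $p \rtimes p'$, restricted to the algebra generated by measurable rectangles, is dominated set-by-set by the additive lower bound coming from finite disjoint unions of rectangles, on each of which the factorised inequality above applies and sums correctly because $p \rtimes p'$ is additive there while $\mu_{M \rtimes M'}$ is sub-additive; then Carathéodory extension of this rectangle pre-bound, compared against the unique extension $p \rtimes p'$, yields the claim. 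A routine but slightly delicate point to verify along the way is the measurability of $(f, f') \mapsto \| \ind{\hat B} \cdot (f \rtimes f') \|$ on $\boL^\infty(\boX) \times \boL^\infty(\boX)$ for general $\hat B$, which I would handle by the same monotone-class reasoning used implicitly for the default characteristic function, reducing to rectangles where it follows from the assumed measurability of $f \mapsto \sup f$ and of the coordinate restriction maps.
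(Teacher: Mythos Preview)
The paper does not supply a proof of this proposition; it is simply stated and then immediately used (in the proof of Theorem~\ref{thm:atomicFusion}) for the non-rectangle set $\Delta(B)$. So there is nothing to compare your argument against on the paper's side, and the substance of the review is whether your argument actually works.

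Your treatment of rectangles $\hat B = B\times B'$ is correct: the factorisation $\|\ind{B\times B'}\cdot(f\rtimes f')\| = \|\ind{B}\cdot f\|\,\|\ind{B'}\cdot f'\|$ and Tonelli give exactly $p(B)p'(B')$ as a lower bound for the right-hand side. The problem is the extension step, and the issue is not merely that your monotone-class/covering argument is incomplete --- it cannot be completed, because the statement is \emph{false} for general $\hat B$. Your covering argument produces $p\rtimes p'(\hat B)\leq \inf\big\{\sum_n \mu_{M\rtimes M'}(R_n): \hat B\subseteq\bigcup_n R_n\big\}$, but sub-additivity of $\mu_{M\rtimes M'}$ only gives $\mu_{M\rtimes M'}(\hat B)\leq$ that infimum, which is the wrong inequality to chain with; you correctly sense this (``this bound is too weak''), but the patch you sketch does not resolve it.

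Here is an explicit counterexample. Take $\boX=[0,1]$, $p=p'$ equal to Lebesgue measure, and the Dirac constraints $M=\delta_f$, $M'=\delta_{f'}$ with $f(x)=1-x$ and $f'(x')=x'$. Both lie in $\boC^*_1(\boX)$, and both dominate Lebesgue: for any Borel $B\subseteq[0,1]$ one has $|B|\leq 1-\inf B=\sup_B f$ and $|B|\leq \sup B=\sup_B f'$, with equality on $\boX$. Now let $\hat B=\{(x,x')\in[0,1]^2: x'\leq x\}$. Then
\eqns{
p\rtimes p'(\hat B)=\tfrac12,\qquad
\|\ind{\hat B}\cdot(f\rtimes f')\| = \sup_{x'\leq x}(1-x)\,x' = \sup_{x\in[0,1]}(1-x)x = \tfrac14,
}
so the claimed inequality fails. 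The proposition therefore holds on measurable rectangles (and, trivially, on $p\rtimes p'$-null sets such as the diagonal used in Theorem~\ref{thm:atomicFusion}), but not on the full product $\sigma$-algebra as stated.
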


Proposition~\ref{prop:probaConstraintGeneralRemark:joint} is only an implication since a joint probability measure dominated by a probabilistic constraint of the same form might not correspond to independent random variables. This means that another concept is needed to describe this special class of joint probabilistic constraints.

\begin{definition}
\label{def:independentlyConstrained}
Let $X \in \boL^0(\Omega,\boX)$ and $X' \in \boL^0(\Omega',\boX)$ be two random variables 
then $X$ and $X'$ are said to be \emph{independently constrained} by $\hat{M} \in \boC_1(\boX \times \boX)$ if there exist $M, M' \in \boC_1(\boX)$ such that $\hat{M}$ can be expressed as a convolution on the semigroup $(\boL^{\infty}(\boX),\rtimes)$ as
\eqnl{eq:def:independentlyConstrained}{
\hat{M} = M * M',
}
recalling that, in this case,
\eqns{
M * M' (F) \defeq \int \ind{F}(f \rtimes f') M(\d f) M'(\d f')
}
for any measurable subset $F$ of $\boL^{\infty}(\boX \times \boX)$.
\end{definition}

Definition~\ref{def:independentlyConstrained} implies that any function $\hat{f}$ in the support of $\hat{M}$ is such that there exist $f$ and $f'$ in $\boL^{\infty}(\boX)$ for which $\hat{f} = f \rtimes f'$. Functions of this type can be said to be \emph{separable}. The concept of independently constrained random variables introduced in Definition~\ref{def:independentlyConstrained} differs in general from the standard concept of independence since, even if $X$ and $X'$ were defined on the same probability space, then
\begin{enumerate*}[label=\itshape\alph*\upshape)]
\item independently constrained random variables might actually be correlated, and
\item independent random variables that are not known to be independent might be represented by a probabilistic constraint that does not exclude correlation.
\end{enumerate*}
However, the concepts coincide when the involved probabilistic constraints are equivalent to probability measures. In cases where the probability spaces on which these random variables are defined are different, independence or correlations cannot even be defined.

In order to illustrate the use of measure constraints for modelling uncertainty, several examples of probabilistic constraints are given in the next section.

\section{Examples of probabilistic constraints}
\label{sec:exampleProbaConstraint}

Throughout this section, $X \in \boL^0(\Omega,\boX)$ will denote a random variable and $P \in \boC^*_1(\boX)$ will be assumed to dominate its law $p \defeq X_*\bbP$.

\subsection{Uninformative case}

If $P = \delta_{\one}$ then the only information provided by $P$ on $X$ is that
\eqns{
p(B) \leq \int \| \ind{B} \cdot f \| P(\d f) = 1
}
for any $B \in \calB(\boX)$, which is non-informative since $p$ is already known to be a probability measure. In other words, nothing is known about the random variable $X$. For instance, the $\sigma$-algebra of $\mu_P$-measurable sets is the trivial $\sigma$-algebra $\{ \emptyset, \boX \}$.

\subsection{Indicator function}

If there exists $A \in \calB(\boX)$ such that $P = \delta_{\ind{A}}$, then it holds that
\eqns{
p(B) \leq \| \ind{A \cap B} \| =
\begin{cases*}
1 & if $A \cap B \neq \emptyset$ \\
0 & otherwise,
\end{cases*}
}
for any $B \in \calB(\boX)$. As a probability measure, $p$ is always less or equal to $1$, so that the only informative part in the previous inequality is:
\eqns{
\frall{\forall B \in \calB(\boX)} ( A \cap B = \emptyset ) \implies ( p(B) = 0 ),
}
that is, all the probability mass of $B$ lies within $A$. This means that the random variable $X$ is only known to be in $A$ almost surely. A uniform distribution over $A$, assuming it can be defined, would not model the same type of knowledge as the corresponding interpretation would be ``realisations of $X$ are equally likely everywhere inside $A$'', whereas the interpretation associated with the probabilistic constraint $\delta_{\ind{A}}$ is ``whatever the law of $X$, it is only known that realisations will be inside $A$''.

\begin{remark}
The $\sigma$-algebra of $\mu_P$-measurable subsets is found to be the completion of the $\sigma$-algebra $\calX_B \defeq \{\emptyset, B, B^c, \boX\}$ with respect to any measure $m$ on $(\boX, \calX_B)$ verifying $m(B) > 0$ as well as $m(B^c) = 0$.
\end{remark}

\subsection{Upper bound}

If there exists $f \in \boL(\boX)$ such that $P = \delta_f$ then for any $B \in \calB(\boX)$,
\eqns{
p(B) \leq \int \Ssup{\ind{B} \cdot f'} P(\d f')  = \int \Ssup{\ind{B} \cdot f'} \delta_f(\d f') = \ssup{\ind{B} \cdot f}.
}
In this case, $\delta_f$ can be seen as the simplest non-trivial form of probabilistic constraint. The two previous examples are special cases with $f = \one$ and $f = \ind{A}$.

\begin{remark}
Probabilistic constraints of this form are equivalent to \emph{possibility distributions} \citep{Negoita1978, Dubois1988, Dubois1998, Dubois2000} and are also related to the notion of \emph{membership function} of a \emph{fuzzy set} \citep{Zadeh1965}. However, the approach considered in this work does not rely on the notion of fuzzy set and the form considered here is only used as a simple example of probabilistic constraint.
\end{remark}

\subsection{Combination of upper bounds}

If the probabilistic constraint $P$ is of the form $P = \sum_{i=1}^N a_i \delta_{f_i}$ then
\eqns{
p(B) \leq \int \Ssup{\ind{B} \cdot f'} P(\d f') = \sum_{i=1}^N a_i \ssup{\ind{B} \cdot f_i} = \sum_{i=1}^N a_i \sup_B f_i
}
for any $B \in \calB(\boX)$. This combination of upper bounds is not equivalent to a single bound in general and allows for modelling more accurate information. For instance, if $P = 0.5 \delta_{\ind{B}} + 0.5 \delta_{\ind{B'}}$ with $B \cap B' = \emptyset $, then one half of the probability mass of $p$ is in $B$ and the other half is in $B'$.

\subsection{Constraint based on a partition}

If there exists a measurable countable partition $\pi$ of $\boX$ and if the probability measure $P$ is of the form
\eqns{
P = \sum_{B \in \pi} q(B) \delta_{\ind{B}},
}
where $q$ is a probability measure on the sub-$\sigma$-algebra generated by $\pi$, then
\eqns{
\frall{\forall B \in \pi} p(B) = q(B),
}
i.e.\ the information available about $p$ is the one embedded into $q$.
\begin{proof}
From the definition of probabilistic constraint, we deduce that the inequality ${p(B) \leq q(B)}$ holds for any $B \in \pi$. Assume there exists $B \in \pi$ such that $p(B) < q(B)$. This assumption implies that
\eqns{
p(\boX) = \sum_{B \in \pi} p(B) < \sum_{B \in \pi} q(B) = q(\boX) = 1,
}
which is a contradiction since $p$ is a probability measure.
\end{proof}

Following the construction of an outer measure from a measure described in \eqref{eq:prop:probabilisticConstraint:inducedOuterMeas}, we define the outer measure $q^*$ induced by $q$ on $\boX$ as follows
\eqns{
\frall{\forall C \subseteq \boX} q^*(C) = \inf\{ q(B) \st B \in \sigma(\pi) \ET B \supseteq C \}.
}
Since $\pi$ is a partition of $\boX$, it is easy to check that the outer measure $\mu_P$ verifying
\eqns{
\mu_P(C) = \sum_{B \in \pi \sst B \cap C \neq \emptyset} q(B)
}
for all subsets $C$ of $\boX$ is equal to the outer measure $q^*$. This example shows that probabilistic constraints are versatile enough to model the same level of information as with a sub-$\sigma$-algebra generated by a partition.

\subsection{Probability measure}

Given that the singletons of $\boX$ are measurable, if the support of $P$ is in the subset $\boI_{\s}(\boX) \subset \boI(\boX)$ of indicator functions on singletons, then there exists a measure $q$ in $\boM_1(\boX)$ such that, for any $B \in \calB(\boX)$, it holds that
\eqns{
p(B) \leq \int_{\boI_{\s}(\boX)} \Ssup{\ind{B} \cdot f} P(\d f) = \int \ind{B}(x) q(\d x) = q(B), 
}
from which we conclude that $p = q$. The proof of the equality is very similar to the proof for constraints based on a partition.

\subsection{Plausibility}

If $P$ has the form
\eqns{
P = \sum_{A \in \calA} g(A) \delta_{\ind{A}}
}
for some set $\calA$ of measurable subsets and some function $g:\calA \to \bbR^+$, then
\eqns{
p(B) \leq \int_{\boI(\boX)} \Ssup{\ind{B} \cdot f} P(\d f) = \sum_{A \in \calA \sst A \cap B \neq \emptyset} g(A)
}
for any $B \in \calB(\boX)$, and the probabilistic constraint $P$ reduces to a \emph{plausibility} as defined in the context of Dempster-Shafer theory \citep{Dempster1967,Shafer1976}. A non-technical overview of the concepts of this theory is given by \cite[Sect.~4.4]{Williamson1989}. The two previous examples can be seen as special cases of plausibility where $g$ is a measure on $\calA = \pi$ or $\calA = \boI_{\s}(\boX)$.

The multiple cases given in this section show that probabilistic constraints can model information with various degrees of precision. In this work, we will be mainly interested in the most informative and the most uninformative cases.

\section{Operations on measure constraints}
\label{sec:operationsOnMeasureConstraints}

It is essential to be able to adapt a probabilistic constraint when the underlying probability measure is transformed, such as when considering a pushforward for the considered measure with respect to a given measurable function. In the following sections, $\boX_1$ and $\boX_2$ are assumed to be Polish spaces equipped with their Borel $\sigma$-algebra.

\subsection{Pushforward}

The operation of pushforwarding measures from one measurable space to another is central to measure theory and should therefore be translated in terms of measure constraints. In other words, if $M$ dominates the measure $m$ and $\xi$ is a measurable function, then what is the measure constraint dominating $\xi_*m$?

The objective is to find sufficient conditions for the existence of a measure constraint $M'$ verifying
\eqnl{eq:prop:pushforward}{
 M'(\chi (B, \cdot)) =  M(\chi(\xi^{-1}[B], \cdot ))
}
for all $B \in \calB(\boX_2)$, where $\chi$ is the characteristic function based on the supremum norm defined in \eqref{eq:characteristicSimple}. It first appears from the properties of outer measures described in Section~\ref{sec:outerMeasure} that $A \mapsto M(\chi(\xi^{-1}[A],\cdot))$ is an outer measure, so that $M'$ is possibly dominating $\xi_* m$. The existence of such a measure constraint is proved in the next proposition.

\begin{proposition}
\label{prop:pushforwardConstraint:equality}
Let $M$ be a measure constraint on $\boX_1$ and let $\xi$ be a measurable mapping from $\boX_1$ to $\boX_2$, then the measure constraint $M'$ on $\boX_2$ defined as the pushforward of $M$ by the mapping $T_{\xi}$ from $\boL^{\infty}(\boX_1)$ to $\boL^{\infty}(\boX_2)$ characterised by
\eqns{
\frall{\forall f \in \boL^{\infty}(\boX_1)} T_{\xi}(f) : y \mapsto \sup_{\xi^{-1}[\{y\}]} f
}
verifies \eqref{eq:prop:pushforward}.
\end{proposition}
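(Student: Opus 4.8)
The plan is to verify directly that the pushforward measure $M' \defeq (T_\xi)_* M$ satisfies the defining equality \eqref{eq:prop:pushforward}. By the change-of-variables formula for pushforward measures, for any $B \in \calB(\boX_2)$ we have
\eqns{
M'(\chi(B,\cdot)) = M'(\{ g \in \boL^\infty(\boX_2) \st \Ssup{\ind{B} \cdot g} \geq \cdot \}) = \int \Ssup{\ind{B} \cdot T_\xi(f)} \, M(\d f),
}
so it suffices to show the pointwise identity
\eqnl{eq:pushforward:pointwise}{
\Ssup{\ind{B} \cdot T_\xi(f)} = \Ssup{\ind{\xi^{-1}[B]} \cdot f}
}
for every $f \in \boL^\infty(\boX_1)$ and $B \in \calB(\boX_2)$. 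This reduces everything to an elementary supremum computation, and I would carry it out by unwinding the definition of $T_\xi$: the left-hand side is $\sup_{y \in B} \big( \sup_{x \in \xi^{-1}[\{y\}]} f(x) \big)$, and since the fibres $\xi^{-1}[\{y\}]$ for $y \in B$ partition $\xi^{-1}[B]$, a standard ``sup of sups over a partition equals sup over the union'' argument gives exactly $\sup_{x \in \xi^{-1}[B]} f(x) = \Ssup{\ind{\xi^{-1}[B]} \cdot f}$. One mild care point: when a fibre $\xi^{-1}[\{y\}]$ is empty its inner supremum should be read as $0$ (or $-\infty$ and then clipped by nonnegativity), matching the convention that $\ind{\xi^{-1}[B]} \cdot f$ vanishes off $\xi^{-1}[B]$; since we are working with nonnegative functions in $\boL^\infty(\boX_1)$, treating the empty-fibre value as $0$ is consistent and does not affect the outer supremum as long as $\xi^{-1}[B]$ itself is nonempty, and both sides are $0$ when it is empty.

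Before the main computation I would dispatch the bookkeeping that makes the statement well-posed. First, $M'$ is indeed a measure on $\boL^\infty(\boX_2)$: this needs $T_\xi$ to be a measurable map from $(\boL^\infty(\boX_1), \calL^\infty(\boX_1))$ to $(\boL^\infty(\boX_2), \calL^\infty(\boX_2))$, which I would simply invoke from the measurability assumptions granted in the paragraph preceding the proposition (the measurability of $f \mapsto \sup f$ and of composition-type operations on $\boL^\infty$). Second, one must check that $M'$ is actually a measure constraint in the sense of Definition~\ref{def:constraint}, i.e.\ that $\mu_{M'} : A \mapsto M'(\chi(A,\cdot))$ is an outer measure on $\boX_2$; but by \eqref{eq:pushforward:pointwise} we have $\mu_{M'}(A) = \int \Ssup{\ind{\xi^{-1}[A]} \cdot f}\, M(\d f) = \mu_M(\xi^{-1}[A])$ for all $A \subseteq \boX_2$, and the discussion in Section~\ref{sec:outerMeasure} already records that $C \mapsto \mu(f^{-1}[C])$ is an outer measure whenever $\mu$ is, so this is immediate once \eqref{eq:pushforward:pointwise} is in hand.

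I do not expect a serious obstacle here; the only genuinely delicate issue is the convention at empty fibres and, relatedly, ensuring that the supremum identity \eqref{eq:pushforward:pointwise} holds for arbitrary (not just surjective) $\xi$ and for all subsets — but the paper's supremum-based characteristic function $\chi(A,f) = \Ssup{\ind{A} \cdot f}$ is defined for all $A \subseteq \boX_1$, and the identity $\sup_{x \in \xi^{-1}[B]} f(x) = \sup_{y \in B} \sup_{x \in \xi^{-1}[\{y\}]} f(x)$ is robust to these cases once the empty-set supremum is fixed at $0$. The cleanest presentation is therefore: (i) state and prove \eqref{eq:pushforward:pointwise} as a lemma-within-the-proof by the partition-of-fibres argument; (ii) integrate both sides against $M$ and apply the pushforward change of variables to identify the left side with $M'(\chi(B,\cdot))$; (iii) note the outer-measure property of $\mu_{M'}$ falls out of the same identity, completing the verification that $M'$ is a measure constraint verifying \eqref{eq:prop:pushforward}.
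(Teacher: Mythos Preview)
Your approach is essentially identical to the paper's: both reduce \eqref{eq:prop:pushforward} to the pointwise supremum identity $\Ssup{\ind{B}\cdot T_\xi(f)} = \Ssup{\ind{\xi^{-1}[B]}\cdot f}$ via the fibre decomposition $\xi^{-1}[B] = \bigcup_{y\in B}\xi^{-1}[\{y\}]$, then integrate against $M$ and apply the pushforward change of variables. The only point you leave implicit that the paper addresses explicitly is that, for each fixed $f$, the function $y\mapsto \sup_{\xi^{-1}[\{y\}]} f$ is itself Borel on $\boX_2$ (so that $T_\xi$ really lands in $\boL^\infty(\boX_2)$); the paper dispatches this with a citation to \cite[Corollary~2.13]{Crauel2003}, and you would need the same or an equivalent appeal, since suprema over uncountable fibres are not automatically measurable.
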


\begin{proof}
The first step is to rewrite the uniform norm over $\boX_1$ in a suitable way as
\eqnsa{
\| \ind{\xi^{-1}[B]} \cdot f \| & = \sup_{y \in \boX_2} \Big( \sup_{x \in \xi^{-1}[\{y\}]} \big( \ind{\xi^{-1}[B]}(x) f(x) \big) \Big) \\
& = \sup_{y \in \boX_2} \Big( \ind{B}(y) \sup_{\xi^{-1}[\{y\}]} f \Big)
}
for any $f \in \boL^{\infty}(\boX_1)$ and any $B \in \calB(\boX_2)$, where the second line is explained by the fact that the function $\ind{\xi^{-1}[B]}$ is constant over $\xi^{-1}[\{y\}]$, and is equal to $\ind{B}(y)$ everywhere on this subset. By \cite[Corollary~2.13]{Crauel2003}, it holds that $T_{\xi}(f) \in \boL^{\infty}(\boX_2)$ since $\{ \xi^{-1}[\{y\}] \times \{y\} : y \in B \}$ is a measurable subset of $\boX_1 \times \boX_2$ for any $B \in \calB(\boX_2)$. The fact that $T_{\xi}$ is measurable follows from the assumption that $\sup : \boL^{\infty}(\boX) \to \bbR$ is measurable. It then holds that
\eqns{
\int \| \ind{B} \cdot  f' \| M'(\d f') = \int \| \ind{\xi^{-1}[B]} \cdot f \| M(\d f),
}
where $M'$ is defined as the pushforward $(T_{\xi})_*M$.
\end{proof}

\begin{remark}
Proposition~\ref{prop:pushforwardConstraint:equality} can be straightforwardly generalised to characteristic functions of the form $\chi : (B,f) \mapsto \Ssup{(\chi' \circ \ind{B})\cdot f}$ for some suitable function $\chi' : \bbR \to \bbR$.
\end{remark}

In the following example, the operation of marginalisation is translated to probabilistic constraints.

\begin{example}
Let $\boX$ be the Cartesian product of $\boX_1$ and $\boX_2$ and let $p$ be a given probability measure on $\boX$. The objective is to project a given probabilistic constraint $P \in \boC_1(\boX)$ for $p$ into a probabilistic constraint $P'$ in $\boC_1(\boX_2)$ for the corresponding marginal $p' \in \boM_1(\boX_2)$ characterised by $p'(B) = p(\boX_1 \times B)$. Marginalisation can be performed on $P'$ by using Proposition~\ref{prop:pushforwardConstraint:equality} with the canonical projection map ${\xi : \boX \to \boX_2}$, from which we find that
\eqns{
\frall{\forall B \in \calB(\boX_2)} p(B) \leq \int \Ssup{\ind{B} \cdot f'} (T_{\xi})_*P(\d f'),
}
where the mapping $T_{\xi} : \boL^{\infty}(\boX) \to \boL^{\infty}(\boX_2)$ is found to be
\eqns{
T_{\xi}(f) : y \mapsto \sup_{\xi^{-1}[\{y\}]} f = \Ssup{f(\cdot,y)}.
}
\end{example}

Another case, studied in the next example, is the pushforward of a probabilistic constraint defined as being equivalent to a probability measure on a sub-$\sigma$-algebra.

\begin{example}
\label{ex:pushforwardSubSigAlg}
Considering again the case detailed in Section~\ref{sec:exampleProbaConstraint} where $\pi$ is a measurable countable partition of $\boX_1$ and where the probabilistic constraint $P$ is of the form
\eqns{
P = \sum_{B \in \pi} p'(B) \delta_{\ind{B}},
}
where $p'$ is a probability measure on the sub-$\sigma$-algebra generated by $\pi$, then for any measurable mapping $\xi : \boX_1 \to \boX_2$ such that $\sigma(\xi) \subseteq \sigma(\pi)$, it holds that
\eqns{
\frall{\forall B \in \pi}T_{\xi}(\ind{B}) : y \mapsto \sup_{\xi^{-1}[\{y\}]} \ind{B} = \ind{\xi(B)}(y),
}
that is, since $\xi[B]$ is a singleton,
\eqns{
T_{\xi}(\ind{B}) = \ind{\xi(B)} \in \boI_{\s}(\boX_2),
}
so that $(T_{\xi})_*P$ has its support in $\boI_{\s}(\boX_2)$ and is therefore equivalent to a probability measure on $\boX_2$. 
\end{example}

The concepts of pushforward and marginalisation for probabilistic constraints are important and will be useful in practice. We now consider the converse operation, referred to as pullback and which will also contribute to the formulation of the main results in this work.

\subsection{Pullback}

The objective in this section is to understand in which situations it is possible to reverse the operation of pushforwarding measures and measure constraints. We first define what is understood as a pullback.

\begin{definition}
Let $m$ be a measure on the space $\boX_2$ and let $\xi$ be a measurable mapping from $\boX_1$ to $\boX_2$, then a \emph{pullback} measure, denoted $\xi^*m$, is a measure in $\boM_1(\boX_1)$ satisfying
\eqns{
\xi_*(\xi^*m) = m.
}
\end{definition}

Note that there are possibly many pullback measures for a given measure so that the concept of measure constraint will be useful in order to represent this multiplicity, as in the following proposition. As with the pushforward, we want to obtain the existence of a measure constraint $M'$ with characteristic function $\chi$ for the pullback of a measure $m \in \boM_1(\boX_2)$ such that
\eqnl{eq:prop:pullback}{
M'(\chi(\xi^{-1}[B],\cdot)) = M(\chi(B,\cdot))
}
holds for any $B \in \calB(\boX_2)$, where $M$ is a measure constraint dominating $m$. It appears from the properties of outer measures described in Section~\ref{sec:outerMeasure} that $A \mapsto M(\chi(\xi[A],\cdot))$ is an outer measure, so that $M'$ is possibly a measure constraint dominating the pullback measures of $m$.

\begin{proposition}
\label{prop:pullbackConstraint:equality}
Let $M$ be a measure constraint on $\boX_2$ and let $\xi$ be a measurable mapping from $\boX_1$ to $\boX_2$, then the measure constraint $M'$ on $\boX_1$ defined as the pushforward of $M$ by the mapping $T'_{\xi} : \boL^{\infty}(\boX_2) \to \boL^{\infty}(\boX_1)$ defined as
\eqns{
T'_{\xi} : f \mapsto f \circ \xi
}
verifies \eqref{eq:prop:pullback}.
\end{proposition}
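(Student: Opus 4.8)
The plan is to mirror the proof of Proposition~\ref{prop:pushforwardConstraint:equality}, working directly with the supremum-based characteristic function $\chi(B,f) = \|\ind{B}\cdot f\|$. Concretely, I would start from the right-hand side of \eqref{eq:prop:pullback} and rewrite $\| \ind{B} \cdot f \|$, for $f \in \boL^{\infty}(\boX_2)$ and $B \in \calB(\boX_2)$, in a way that exposes the preimage $\xi^{-1}[B]$. The key elementary identity is
\eqns{
\| \ind{\xi^{-1}[B]} \cdot (f \circ \xi) \| = \sup_{x \in \boX_1} \ind{B}(\xi(x)) f(\xi(x)) = \sup_{y \in \xi[\boX_1]} \ind{B}(y) f(y),
}
since $\ind{\xi^{-1}[B]}(x) = \ind{B}(\xi(x))$ for every $x$. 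The only subtlety here is that the supremum on the right is taken over the image $\xi[\boX_1]$ rather than over all of $\boX_2$; I would note that this matches the statement \eqref{eq:prop:pullback} exactly, because the left-hand side $M(\chi(B,\cdot)) = \int \|\ind{B}\cdot f\| M(\d f)$ should be read relative to the constraint $M$ on $\boX_2$, and $\|\ind{B} \cdot f\| \geq \sup_{\xi[\boX_1]} \ind{B} f$ in general — so strictly the natural object is a pullback only when $\xi$ is surjective, or one restricts attention to $B \subseteq \xi[\boX_1]$. Assuming $\xi$ is onto (or reading \eqref{eq:prop:pullback} over the relevant trace $\sigma$-algebra, consistent with how pullback measures are defined), these quantities agree and $\sup_{y} \ind{B}(y) f(y) = \|\ind{B} \cdot f\|$.

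Next I would verify the two standard bookkeeping facts. First, $T'_{\xi}$ is well-defined as a map $\boL^{\infty}(\boX_2) \to \boL^{\infty}(\boX_1)$: composition with the measurable $\xi$ preserves measurability, and $\|f \circ \xi\| \leq \|f\|$ so boundedness is preserved — in fact $f \circ \xi$ lies in the relevant function space and $T'_{\xi}$ is non-expansive for the uniform norm. Second, $T'_{\xi}$ is measurable as a map $(\boL^{\infty}(\boX_2),\calL^{\infty}(\boX_2)) \to (\boL^{\infty}(\boX_1),\calL^{\infty}(\boX_1))$; this is exactly the measurability of $f \mapsto f \circ \xi$ that the paper has already explicitly assumed in the discussion preceding Proposition~\ref{prop:probaConstraintGeneralRemark:rescaling}, so I would simply invoke that assumption. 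Given these, the pushforward $M' \defeq (T'_{\xi})_* M$ is a well-defined measure on $\boL^{\infty}(\boX_1)$.

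Finally I would assemble the identity. Using the change-of-variables formula for pushforward measures,
\eqns{
\int \| \ind{\xi^{-1}[B]} \cdot f' \| M'(\d f') = \int \| \ind{\xi^{-1}[B]} \cdot (T'_{\xi} f) \| M(\d f) = \int \| \ind{\xi^{-1}[B]} \cdot (f \circ \xi) \| M(\d f),
}
and then applying the elementary identity above turns the integrand into $\|\ind{B} \cdot f\|$, giving $M'(\chi(\xi^{-1}[B],\cdot)) = \int \|\ind{B}\cdot f\| M(\d f) = M(\chi(B,\cdot))$, which is \eqref{eq:prop:pullback}. It remains to confirm that $M'$ is genuinely a measure constraint, i.e.\ that $A \mapsto M'(\chi(A,\cdot)) = \mu_{M'}(A)$ is an outer measure on $\boX_1$; this follows because $\mu_{M'}(A) = M(\chi(\xi[A],\cdot)) = \mu_M(\xi[A])$ and $A \mapsto \mu_M(\xi[A])$ is an outer measure by the closure property of outer measures under forward images recalled in Section~\ref{sec:outerMeasure} (the same remark made just before the proposition statement).

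I expect the only real obstacle to be the image-versus-whole-space issue in the supremum: writing $\|\ind{\xi^{-1}[B]}\cdot(f\circ\xi)\|$ as $\sup_{\xi[\boX_1]}\ind{B}f$ rather than $\|\ind{B}\cdot f\|$. Everything else — measurability of $T'_{\xi}$, non-expansiveness, the change of variables, and the outer-measure check — is routine and either assumed in the paper or already established in Section~\ref{sec:outerMeasure}. The cleanest resolution, consistent with the paper's \emph{Definition} of pullback (which only asks $\xi_*(\xi^*m) = m$), is to observe that any mass $m$ assigns outside $\xi[\boX_1]$ is irrelevant, so one may harmlessly work as though $\xi$ is surjective; I would state this reduction explicitly at the start of the proof.
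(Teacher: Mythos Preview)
Your proposal is correct and follows essentially the same approach as the paper: rewrite $\|\ind{B}\cdot f\|$ as $\|\ind{\xi^{-1}[B]}\cdot(f\circ\xi)\|$, verify that $T'_{\xi}$ lands in $\boL^{\infty}(\boX_1)$ and invoke its assumed measurability, then apply the change-of-variables formula for the pushforward. Your discussion of the image-versus-whole-space issue in the supremum is in fact more careful than the paper's own proof, which asserts the key identity $\|\ind{B}\cdot f\| = \|\ind{\xi^{-1}[B]}\cdot(f\circ\xi)\|$ without commenting on surjectivity.
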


\begin{proof}
We rewrite the uniform norm of $\ind{B} \cdot f$ as
\eqns{
\Ssup{\ind{B} \cdot f} = \Ssup{\ind{\xi^{-1}[B]} \cdot (f \circ \xi)}
}
for any $f \in \boL^{\infty}(\boX_2)$ and any $B \in \calB(\boX_2)$. Since the composition of measurable mappings is measurable, we verify that the codomain of $T'_{\xi}$ is $\boL^{\infty}(\boX_1)$. Since the mapping $T'_{\xi}$ is assumed to be measurable, we write
\eqns{
\int \Ssup{\ind{B} \cdot f} M(\d f) = \int \Ssup{\ind{\xi^{-1}[B]} \cdot f'} (T'_{\xi})_* M(\d f'),
}
which terminates the proof of the proposition.
\end{proof}

\begin{remark}
Other transformations than $T'_{\xi}$ could lead to a measure on $\boL^{\infty}(\boX_1)$ that verifies \eqref{eq:prop:pullback}. However, these measures would not dominate all the pullback measures of $m$.
\end{remark}

\begin{example}
If $m$ is a known probability measure, then the corresponding canonical probabilistic constraint $M$ would have its support in the set $\boI_{\s}(\boX_2)$ of indicator functions of singletons in $\boX_2$. The induced probabilistic constraint $M'$ on $\boX_1$ would then be of the form
\eqns{
M' = \sum_{B \in \pi} m'(B) \delta_{\ind{B}},
}
where $\pi$ is the partition generated by $\xi$ on $\boX_2$ and $m'$ is the probability measure induced by $m$ on the sub-$\sigma$-algebra generated by $\xi$. This example closes a loop started in Example~\ref{ex:pushforwardSubSigAlg} where a probabilistic constraint on a sub-$\sigma$-algebra generated by a partition was shown to induce a probability measure on $\boX_2$.
\end{example}

Now equipped with the concepts of pushforward and pullback for measure constraints and measures, we can address the question of data assimilation between different sources of information.

\section{Data assimilation with probabilistic constraints}
\label{sec:fusion}

The objective in this section is to introduce a data-assimilation operation for independent sources of information about the same physical system.

\subsection{State-space fusion}
\label{ssec:stateSpaceFusion}

We first address the case where the state satisfies some desirable but stringent conditions before showing that a much larger class of state spaces can be considered under a weak assumption. We assume that there exists a \emph{representative} set $\bscalX$ in which the state of the system of interest can be uniquely characterised and we denote $\xi$ the projection map between $\bscalX$ and $\boX$. Both $\bscalX$ and $\boX$ are assumed to be Polish spaces equipped with their Borel $\sigma$-algebras.

Two probabilistic constraints $P,P' \in \boC_1(\bscalX)$ are considered and assumed to be \emph{compatible}, that is, at most one of them corresponds to an unknown probability measure. These probabilistic constraints can be understood as either prior knowledge and observation or both as observations, which justifies the use of the generic term \emph{data} to cover the two types of information. Although one or both of these probabilistic constraint do not represent a true law, it is useful to consider two probability measures respectively dominated by $P$ and $P'$ in order to unveil the mechanism behind data assimilation.

Let $X \in \boL^0(\Omega,\boX)$ and $X' \in \boL^0(\Omega',\boX)$ be two random variables in $\boX$ constrained by $P$ and $P'$ respectively and consider the predicate
\eqns{
\tag{\#}
\mbox{``$X$ and $X'$ represent the same physical system''.}
}
The random variables $X$ and $X'$ are defined on different probability spaces as they are assumed to originate from different representations of randomness. As a consequence, the predicate $(\#)$ cannot be expressed as $X = X'$. A random variable $X \times X'$ can be defined on the probability space $(\Omega \times \Omega', \Sigma \otimes \Sigma', \bbP \rtimes \bbP')$ of joint outcomes/events and the law of $X \times X'$ is the probability measure $p \rtimes p'$ where $p$ and $p'$ are the respective laws of $X$ and $X'$. 
The predicate $(\#)$ cannot be expressed as an event in $\boX \times \boX$ in general, yet, it can be expressed as the event $\Delta = \{(x,x) \st x \in \bscalX \}$ in the product $\sigma$-algebra $\calB(\bscalX)\otimes \calB(\bscalX)$ because of the representativity of the set $\bscalX$.

The data-assimilation operation for independent sources of information can now be formalised as an operation between probabilistic constraints as shown in the following theorem.

\begin{theorem}
\label{thm:atomicFusion}
Let $\bscalX$ be a representative set and let $P$ and $P'$ be compatible probabilistic constraints on $\bscalX$ assumed to represent the same physical system, then the probabilistic constraint resulting from data assimilation is characterised by the following convolution of measures on $(\boL^{\infty}(\bscalX),\cdot)$:
\eqnl{eq:thm:atomicFusion:convolution}{
P \star P' = \dfrac{(P * P')^{\scl}}{\| P * P'\|},
}
whenever $\| P * P'\| \neq 0$ holds.
\end{theorem}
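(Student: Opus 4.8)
The plan is to realise the three conceptual stages of data assimilation — forming the joint description of $(P,P')$, incorporating the predicate $(\#)$, and renormalising — as three explicit operations on probabilistic constraints, and to check that their composition is the right-hand side of \eqref{eq:thm:atomicFusion:convolution}. Fix probability measures $p$ and $p'$ on $\bscalX$ dominated by $P$ and $P'$ respectively (this is where compatibility of $P$ and $P'$ is used), and let $p \rtimes p'$ be their product on $\bscalX\times\bscalX$. By Proposition~\ref{prop:probaConstraintGeneralRemark:joint}, $p \rtimes p'$ is dominated by $\int \Ssup{\ind{\hat B}\cdot(f \rtimes f')}P(\d f)P'(\d f')$; since $P$ and $P'$ are probabilistic constraints this quantity equals $1$ for $\hat B = \bscalX\times\bscalX$, so it is the outer measure induced by a probabilistic constraint $\hat P \defeq P * P' \in \boC_1(\bscalX\times\bscalX)$ obtained as a convolution on $(\boL^{\infty}(\bscalX),\rtimes)$ as in Definition~\ref{def:independentlyConstrained}.

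Next, the predicate $(\#)$ has been identified with the (measurable) diagonal $\Delta = \{(x,x) \st x \in \bscalX\}$, and restricting the joint description to $\Delta$ and transporting it back to $\bscalX$ through the diagonal embedding $\iota : x \mapsto (x,x)$ is, at the level of constraints, the pullback of $\hat P$ along $\iota$. By Proposition~\ref{prop:pullbackConstraint:equality} with $\xi = \iota$, this pullback is the measure constraint $M' \defeq (T'_{\iota})_*\hat P$ with $T'_{\iota} : \hat f \mapsto \hat f \circ \iota$ (measurable by the standing assumption on maps of the form $f \mapsto f\circ\xi$, with codomain in $\boL^{\infty}(\bscalX)$ since $\iota$ is measurable). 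The crucial identity is $T'_{\iota}(f \rtimes f') = (f \rtimes f')\circ\iota = f \cdot f'$ for all $f,f' \in \boL^{\infty}(\bscalX)$: pushing the convolution $P * P'$ on $(\boL^{\infty}(\bscalX),\rtimes)$ forward by $T'_{\iota}$ therefore produces exactly the convolution of $P$ and $P'$ on the topological semigroup $(\boL^{\infty}(\bscalX),\cdot)$, so $M'$ is the measure $P * P'$ of \eqref{eq:thm:atomicFusion:convolution} before renormalisation. One also checks that $M'$ carries the supremum-based characteristic function and has total mass $\mu_{M'}(\bscalX) = \int \Ssup{f\cdot f'}P(\d f)P'(\d f') = \| P * P'\|$.

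It remains to renormalise. When $\| P * P'\| \neq 0$, dividing $M'$ and the dominated measure by this common total mass — the device described just after Definition~\ref{def:constraint} — yields a genuine probabilistic constraint $M'/\| P * P'\| \in \boC_1(\bscalX)$; this is the constraint-level counterpart of dividing by $p \rtimes p'(\Delta)$ when conditioning a measure on $\Delta$, and it remains meaningful even in the diffuse case where $p \rtimes p'(\Delta) = 0$ while $\| P * P'\| > 0$. Applying the canonicalisation of Proposition~\ref{prop:probaConstraintGeneralRemark:rescaling}, and using that $M \mapsto M^{\scl}$ commutes with multiplication by a scalar and leaves the norm unchanged, one obtains $\big(M'/\| P * P'\|\big)^{\scl} = (P * P')^{\scl}/\| P * P'\|$, which is a canonical probabilistic constraint; this is $P \star P'$, and \eqref{eq:thm:atomicFusion:convolution} follows.

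The measurability bookkeeping and the verifications that $P * P'$ carries the default characteristic function, is a bona fide measure constraint, and has the stated mass are routine. The substantive step is the second one: that incorporating $(\#)$ is precisely the pullback along $\iota$, and that this pullback converts the product $\rtimes$ into the pointwise product $\cdot$. This is also where the gain over measure-theoretic conditioning appears — $\| P * P'\|$ can be strictly positive even when $p \rtimes p'(\Delta) = 0$ — which is why the operation is well-defined exactly under the hypothesis $\| P * P'\| \neq 0$ rather than under any compatibility assumption on the dominated measures.
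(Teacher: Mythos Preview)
Your argument is correct and follows the same three-step skeleton as the paper's proof: form the joint constraint on $\bscalX\times\bscalX$, pass to the diagonal, and renormalise. The one genuine difference is presentational rather than mathematical: where the paper computes the diagonal restriction by hand via the identity $\Ssup{\ind{\Delta(B)}\cdot(f\rtimes f')}=\Ssup{\ind{B}\cdot(f\cdot f')}$, you recognise this step as an instance of Proposition~\ref{prop:pullbackConstraint:equality} applied to the diagonal embedding $\iota$, and your key identity $T'_{\iota}(f\rtimes f')=f\cdot f'$ is exactly that same supremum calculation packaged as a pullback. This is a nice use of the paper's own machinery that the paper's proof does not make explicit. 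One small remark: compatibility is not really what guarantees the existence of dominated $p,p'$ (any probabilistic constraint dominates some probability measure); in the paper it is invoked at the normalisation stage, to justify renormalising the \emph{constraint} $M'$ by $\|P*P'\|$ rather than attempting to condition the dominated measures on $\Delta$ --- a point you do capture correctly in your third paragraph.
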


The use of the unary operation $\cdot^{\scl}$ in the numerator of \eqref{eq:thm:atomicFusion:convolution} has for only objective to make $P \star P'$ a canonical probabilistic constraint. An equivalent choice would be to define $P \star P'$ as the normalised version of $P * P'$. However, \eqref{eq:thm:atomicFusion:convolution} is preferred since it ensures that $P \star P'$ is also a probability measure on $\boL^{\infty}(\bscalX)$. A similar rule has been proposed without proof in \cite{Mahler2005} in the context of fuzzy Dempster-Shafer theory \citep{Yen1990}.

\begin{proof}
Let $X \in \boL^0(\Omega,\bscalX)$ and $X' \in \boL^0(\Omega',\bscalX)$ be two random variables in $\bscalX$ which laws $p = X_*\bbP$ and $p' = X'_*\bbP'$ are respectively dominated by $P$ and $P'$. The main idea of the proof is to build the joint law $p \rtimes p'$ in $\bscalX\times\bscalX$ and to study the diagonal $\Delta = \{(x,x) \st x \in \bscalX\}$. The data assimilation mechanism for probabilistic constraints can then be deduced from the one for $p$ and $p'$. Let $\Delta(B)$ be the intersection $\Delta \cap B \times B$ for any $B \in \calB(\bscalX)$ and consider the law $p \rtimes p'$ verifying
\eqns{
p \rtimes p'(B \times B') \leq \int \Ssup{\ind{B \times B'} \cdot (f \rtimes f')} P(\d f) P'(\d f'),
}
for any $B \times B' \in \calB(\bscalX\times\bscalX)$ so that
\eqnsa{
p \rtimes p'(\Delta(B)) & \leq \int \Ssup{\ind{\Delta(B)} \cdot (f \rtimes f')} P(\d f) P'(\d f') \\
& \leq \int \| \ind{B} \cdot (f \cdot f')^{\scl} \| \| f \cdot f' \| P(\d f) P'(\d f') \\
& \leq \int \| \ind{B} \cdot \hat{f} \| M(\d \hat{f})
}
holds for any $B \in \calB(\bscalX)$, with
\eqns{
M(F) \defeq \int \ind{F}((f \cdot f')^{\scl}) \| f \cdot f' \| P(\d f) P'(\d f') = (P * P')^{\scl}(F)
}
for any measurable subset $F$ of $\boL^{\infty}(\bscalX)$. If both $p$ and $p'$ were the actual probability laws depicting the two sources of information, it would be necessary to find a lower bound for $p \rtimes p'(\Delta)$ in order to determine the constraint for the hypothetical posterior law given $\Delta$ (note that such a lower bound would be equal to zero in most of the situations). On the contrary, it has been assumed that at least one of the probabilistic constraints is a primary representation of uncertainty so that it is the measure constraint $M$ that has to be normalised. We conclude that the probabilistic constraint $P \star P'$ on $\bscalX$ verifies
\eqns{
P \star P'(F) = \dfrac{1}{C} (P * P')^{\scl}(F)
}
for any $F \in \calL^{\infty}(\bscalX)$, 
where the normalising constant is found to be
\eqns{
C = (P * P')^{\scl}(\boL^{\infty}(\boX)) 
= \int \| f \cdot f' \| P(\d f) P'(\d f') = \| P * P' \|.
}
The convolution of $P$ and $P'$ is well defined since $(\boL^{\infty}(\bscalX),\cdot)$ is a topological semigroup, as shown in \cite[Sect.~245D]{Fremlin2000}.
\end{proof}


\begin{remark}
\label{rem:stateSpaceFusionBayes}
Using the notations of Theorem~\ref{thm:atomicFusion} and defining the likelihood function $\ell(\Delta \given \cdot)$ on $\boL^{\infty}(\bscalX) \times \boL^{\infty}(\bscalX)$ as
\eqns{
\ell(\Delta \given \cdot) : (f,f') \mapsto \ssup{f \cdot f'},
}
the probability measure $P \star P'$ can be expressed for any $F \in \calL^{\infty}(\bscalX)$ as
\eqns{
P \star P'(F) = \dfrac{P \rtimes P'(\ell(\Delta \given \cdot) \Phi(\cdot,F))}{P \rtimes P'(\ell(\Delta \given \cdot))},
}
where $\Phi$ is a Markov kernel from $\boL^{\infty}(\bscalX) \times \boL^{\infty}(\bscalX)$ to $\boL^{\infty}(\bscalX)$ defined as
\eqns{
\Phi : ((f,f'),F) \mapsto  \delta_{(f \cdot f')^{\scl}}(F).
}
This shows that the canonical version of $P \star P'$ is a proper Bayes' posterior probability measure on $\boL^{\infty}(\bscalX)$. This is a fundamental result since it shows that general data assimilation can be performed in a fully measure-theoretic Bayesian paradigm.
\end{remark}

One immediate objection to Theorem~\ref{thm:atomicFusion} is that the assumption that the set $\bscalX$ is representative is highly restrictive. Very often there is no interest in handling sophisticated state spaces, especially when the received information does not allow for such an in-depth representation. However, several systems could have the same state in $\boX$ and $(\#)$ cannot be expressed as an event in $\calB(\boX) \otimes \calB(\boX)$ so that a different approach has to be considered. In the next corollary, we show that the result of Theorem~\ref{thm:atomicFusion} can be extended to a simpler state space as long as it verifies a natural assumption. 

\begin{corollary}
\label{cor:fusion}
Assuming that there exists a representative set in which the physical system can be uniquely characterised, the result of Theorem~\ref{thm:atomicFusion} holds for random variables in the state space $(\boX,\calB(\boX))$.
\end{corollary}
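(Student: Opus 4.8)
The plan is to lift the data-assimilation problem from $\boX$ to the representative set $\bscalX$, solve it there by Theorem~\ref{thm:atomicFusion}, and then project the answer back to $\boX$, checking along the way that this projection is the same whatever representative set is used. Let $\xi : \bscalX \to \boX$ be the (surjective) projection map and let $T'_{\xi} : f \mapsto f \circ \xi$ and $T_{\xi} : f \mapsto (y \mapsto \sup_{\xi^{-1}[\{y\}]} f)$ be the maps of Propositions~\ref{prop:pullbackConstraint:equality} and~\ref{prop:pushforwardConstraint:equality}. Given compatible $P, P' \in \boC_1(\boX)$, I would first form their pullbacks $\hat{P} \defeq (T'_{\xi})_* P$ and $\hat{P}' \defeq (T'_{\xi})_* P'$ on $\bscalX$. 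Since $\xi$ is surjective one has $\Ssup{f \circ \xi} = \Ssup{f}$, so $\hat{P}, \hat{P}'$ are again probabilistic constraints in $\boC_1(\bscalX)$; they inherit compatibility, and the normalising constant stays nonzero because $\Ssup{\hat{P} \star \text{---}}$ will turn out to satisfy $\| \hat{P} * \hat{P}' \| = \| P * P' \|$. Theorem~\ref{thm:atomicFusion} then applies on $\bscalX$, where the diagonal $\Delta \subset \bscalX \times \bscalX$ encodes the predicate $(\#)$, and yields the genuine fused constraint $\hat{P} \star \hat{P}' = (\hat{P} * \hat{P}')^{\scl} / \| \hat{P} * \hat{P}' \|$ for random variables in $\bscalX$.

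The key observation is that this fusion takes place entirely inside the pulled-back copy of $\boL^{\infty}(\boX)$. Indeed $T'_{\xi}$ is a measurable homomorphism of the topological semigroup $(\boL^{\infty}(\boX), \cdot)$ into $(\boL^{\infty}(\bscalX), \cdot)$, because $(f \circ \xi) \cdot (g \circ \xi) = (f \cdot g) \circ \xi$; hence the pushforward of a convolution is the convolution of the pushforwards, $(T'_{\xi})_*(P * P') = \hat{P} * \hat{P}'$, and since $T'_{\xi}$ preserves $\Ssup{\,\cdot\,}$ we get $\| \hat{P} * \hat{P}' \| = \| P * P' \|$. Moreover $(\,\cdot\,)^{\scl}$ commutes with $(T'_{\xi})_*$, since $(h \circ \xi)^{\scl} = h^{\scl} \circ \xi$ and $\Ssup{h \circ \xi} = \Ssup{h}$. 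Combining these facts gives $\hat{P} \star \hat{P}' = (T'_{\xi})_*\big( (P * P')^{\scl} / \| P * P' \| \big)$, i.e. the fused constraint on $\bscalX$ is itself the pullback of the object $P \star P' \defeq (P * P')^{\scl} / \| P * P' \|$ formed directly on $\boX$ by the same formula as in Theorem~\ref{thm:atomicFusion} — which makes sense because $(\boL^{\infty}(\boX), \cdot)$ is also a topological semigroup.

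It then remains to project back to $\boX$ with Proposition~\ref{prop:pushforwardConstraint:equality}: surjectivity of $\xi$ gives $T_{\xi} \circ T'_{\xi} = \mathrm{id}_{\boL^{\infty}(\boX)}$, so $(T_{\xi})_*(\hat{P} \star \hat{P}') = (T_{\xi} \circ T'_{\xi})_*(P \star P') = P \star P'$. For the domination statement one takes laws $p, p'$ on $\boX$ dominated by $P, P'$, picks pullback measures $q, q'$ on $\bscalX$ (dominated by $\hat{P}, \hat{P}'$ by Proposition~\ref{prop:pullbackConstraint:equality}), applies Theorem~\ref{thm:atomicFusion} to obtain the posterior-given-$\Delta$ on $\bscalX$ dominated by $\hat{P} \star \hat{P}'$, and pushes it forward through $\xi$; by Proposition~\ref{prop:pushforwardConstraint:equality} its pushforward on $\boX$ is dominated by $(T_{\xi})_*(\hat{P} \star \hat{P}') = P \star P'$, which is exactly the fused belief expressed in the simpler state space. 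Since the formula $P \star P' = (P * P')^{\scl} / \| P * P' \|$ mentions only $\boX$, the result does not depend on the chosen representative set, so Theorem~\ref{thm:atomicFusion} carries over to $(\boX, \calB(\boX))$. I expect the main obstacle to be exactly the two commutation facts — that $(T'_{\xi})_*$ intertwines the convolution structures and commutes with $(\,\cdot\,)^{\scl}$ — together with the surjectivity-dependent identity $T_{\xi} \circ T'_{\xi} = \mathrm{id}$; these are what force the answer on $\boX$ to be given by the $\bscalX$-free formula and hence to be well-defined, the remaining steps being routine invocations of the pushforward and pullback propositions.
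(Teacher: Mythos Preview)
Your approach is correct and is essentially the same as the paper's: both arguments hinge on pulling back along $\xi$ via $T'_{\xi}$, fusing in $\bscalX$, and pushing forward via $T_{\xi}$, with the crucial identity being $T_{\xi}\big(T'_{\xi}(f)\cdot T'_{\xi}(f')\big)=f\cdot f'$. You factor this identity into its two pieces (the semigroup homomorphism $(f\circ\xi)\cdot(g\circ\xi)=(f\cdot g)\circ\xi$ and the surjectivity-based $T_{\xi}\circ T'_{\xi}=\mathrm{id}$) and then carefully trace the measure-level consequences, whereas the paper simply writes the composite identity at the function level and declares the conclusion; the mathematical content is the same.
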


\begin{proof}
Let $\xi$ be a given projection map from $\bscalX$ to $\boX$. The operations of pushforward and pullback can be respectively expressed via the following mappings on the space of measurable functions: 
\eqns{
\frall{\forall f \in \boL^{\infty}(\bscalX)} T_{\xi}(f): x \mapsto \sup_{\xi^{-1}[\{x\}]} f,
}
and
\eqns{
\frall{\forall f \in \boL^{\infty}(\boX)} T'_{\xi}(f) = f \circ \xi.
}
We compute the result of the pusforward of combined pullback functions as follows:
\eqns{
\frall{\forall f,f' \in \boL^{\infty}(\boX)} T_{\xi} \big(T'_{\xi}(f) \cdot T'_{\xi}(f') \big) = T_{\xi}( (f \cdot f') \circ \xi ),
}
so that
\eqns{
T_{\xi}\big( T'_{\xi}(f) \cdot T'_{\xi}(f') \big) : x \mapsto \sup_{\xi^{-1}[\{x\}]} T_{\xi}( (f \cdot f') \circ \xi ) = f \cdot f'.
}
This result indicates that the pushforward and the pullback via $\xi$ do not need to be considered when combining probabilistic constraints on the non-representative state space $\boX$ and that, as a result, data assimilation with probabilistic constraints on $\boX$ and on $\bscalX$ can be performed in the same way. 
\end{proof}

The following examples make use of the notations of Theorem~\ref{thm:atomicFusion} and detail two simple cases for which only one of the elements in the probabilistic constraint is maintained through the data assimilation operation.

\begin{example}
If $P$ has its support in $\boI_{\s}(\boX)$ and is therefore equivalent to $p \in \boM_1(\boX)$ and if $P'$ is of the form $P' = \delta_{f'}$, then the probabilistic constraint $P \star P' \in \boC_1(\boX)$ satisfies
\eqns{
P \star P'(F) \propto \int \ind{F}((f'(x)\ind{\{x\}})^{\scl}) f'(x) p(\d x).
}
Since $f'(x)\ind{\{x\}}$ is either the null function or is supported by a singleton, $P\star P'$ has its support in $\boI_{\s}(\boX)$ and is equivalent to the measure $\hat{p} \in \boM_1(\boX)$ defined as
\eqns{
\hat{p}(\d x) \inteq \dfrac{1}{p(f')} f'(x) p(\d x),
}
that is, the law of the combined random variable is known to be $\hat{p}$. The same type of result holds if the roles of $P$ and $P'$ are interchanged. If $f'$ is actually a likelihood of the form $\ell_z$, then $\hat{p}$ can be expressed as
\eqns{
\hat{p}(\d x) \inteq \dfrac{\ell_z(x) p(\d x)}{\int \ell_z( x') p(\d x')},
}
which is the usual Bayes' posterior of $p$ given $z$, where $z$ is interpreted as an observation and $p$ is the prior probability measure. The fact that $\ell_z$ is an element of $\boL(\boX)$ rather than a probability density does not affect the result because of the normalisation factor $\int \ell_z( x') p(\d x')$. As mentioned in Example~\ref{ex:HMM}, one convenient choice for $\ell_z$ is the Gaussian-shaped upper bound defined in the one-dimensional observation case as
\eqns{
\ell_z(x) \defeq \exp\bigg( -\dfrac{(Hx-z)^2}{2\sigma^2} \bigg),
}
where $\sigma$ corresponds to the standard deviation of the corresponding normal distribution and where $H$ is the observation matrix. This choice allows for keeping the Kalman filter recursion while better representing the information received via finite-resolution sensors. Solutions to the multi-target tracking problem in the linear Gaussian case can be formulated in terms of Kalman filters in interaction \cite{DelMoral2013,DelMoral2015} and depend on the value of the denominator in Bayes' theorem which would usually be of the form
\eqns{
\dfrac{1}{\sqrt{2\pi}\varsigma} \exp\bigg( -\dfrac{(Hm-z)^2}{2\varsigma^2} \bigg),
}
with $\varsigma^2 = HPH^{\T} + \sigma^2$, where $m$ and $P$ are the mean of variance of the prior distribution and where $\cdot^{\T}$ is the matrix transposition. With the proposed solution, we find that
\eqns{
\int \ell_z(x) p(\d x) = \dfrac{\sigma}{\varsigma} \exp\bigg( -\dfrac{(Hm-z)^2}{2\varsigma^2} \bigg),
}
which is dimensionless, takes value in the interval $[0,1]$ and can be interpreted as the probability for the observation $z$ to belong to the target with distribution~$p$.
\end{example}

\begin{example}
If $P$ and $P'$ have the form $P = \delta_f$ and $P' = \delta_{f'}$, then the probabilistic constraint $P\star P' \in \boC_1(\boX)$ is found to be
\eqns{
P \star P' = \delta_{(f \cdot f')^{\scl}}
}
that is, $P \star P'$ is a canonical probabilistic constraint and $(f \cdot f')^{\scl} \in \boL(\boX)$ can be seen as the resulting upper bound.
\end{example}

The binary operation $\star$ on $\boC_1(\boX)$ introduced in Theorem~\ref{thm:atomicFusion} has some additional properties that are detailed in the following theorem.

\begin{theorem}
\label{thm:indFusionMonoid}
The space $(\boC_1(\boX), \star)$ is a commutative semigroup with an identity element.
\end{theorem}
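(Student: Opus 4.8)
The plan is to verify the three semigroup-with-identity axioms for $(\boC_1(\boX),\star)$: closure, commutativity, associativity, and the existence of an identity element. Closure is essentially built into Theorem~\ref{thm:atomicFusion}: for $P,P'\in\boC_1(\boX)$ with $\|P*P'\|\neq 0$, the object $P\star P' = (P*P')^{\scl}/\|P*P'\|$ is a probabilistic constraint of supremum-based type, and in fact a canonical one lying in $\boC^*_1(\boX)\subseteq\boC_1(\boX)$; one should note that $\|P*P'\|=\int\|f\cdot f'\|\,P(\d f)P'(\d f')>0$ holds automatically when $P,P'$ are probability measures whose supports contain functions with positive supremum, which is the case for canonical constraints, so the operation is genuinely total on the relevant subset.

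For commutativity, I would argue that $(\boL^{\infty}(\boX),\cdot)$ is a \emph{commutative} topological semigroup (pointwise multiplication of functions is commutative), so the convolution $*$ on $\boM_1(\boL^{\infty}(\boX))$ is commutative by the Corollary in Section~\ref{ssec:convolution}; since $\cdot^{\scl}$ and the normalisation $\|\cdot\|$ depend only on $P*P'$, commutativity of $\star$ follows immediately from $P*P'=P'*P$. For associativity, the key observation is that the operations $\cdot^{\scl}$ and the scalar normalisation interact cleanly with $*$: we established earlier that $\cdot^{\scl}$ is idempotent and distributes over $\rtimes$ (hence over $*$ built from $\rtimes$), i.e.\ $(M*M')^{\scl}=M^{\scl}*M'^{\scl}$ up to the mass-rescaling bookkeeping, and $\|\cdot\|$ is multiplicative under $*$ in the sense $\|M*M'\|=\|M\|\,\|M'\|$ when the masses are tracked correctly. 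Using associativity of the underlying convolution $*$ (again from the Corollary, since $(\boL^{\infty}(\boX),\cdot)$ is a semigroup), one shows $(P\star P')\star P''$ and $P\star(P'\star P'')$ are both proportional to $(P*P'*P'')^{\scl}$ with the same normalising constant, hence equal. Concretely, unwinding one side,
\eqns{
(P\star P')\star P'' = \frac{\big((P*P')^{\scl}*P''\big)^{\scl}}{\|(P*P')^{\scl}*P''\|},
}
and one rewrites $((P*P')^{\scl}*P'')^{\scl}$ as $(P*P'*P'')^{\scl}$ by pushing the inner $\cdot^{\scl}$ outward and using idempotency, while the denominator collapses to $\|P*P'*P''\|$; the other bracketing gives the identical expression by the same manipulation.

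The identity element is $\delta_{\one}$, the uninformative constraint: since $\one$ is the multiplicative unit of $(\boL^{\infty}(\boX),\cdot)$, the point mass $\delta_{\one}$ is the convolution identity, so $P*\delta_{\one}=P$, whence $P\star\delta_{\one}=(P^{\scl})/\|P\|$; for $P$ canonical this is exactly $P$, and for general $P\in\boC_1(\boX)$ it equals the canonical representative $P^{\scl}$, which is equivalent to $P$ — so $\delta_{\one}$ is an identity at least on $\boC^*_1(\boX)$, and modulo the equivalence of Proposition~\ref{prop:probaConstraintGeneralRemark:rescaling} on all of $\boC_1(\boX)$. The main obstacle I anticipate is the bookkeeping in the associativity argument: the unary operation $\cdot^{\scl}$ simultaneously rescales the functions to unit norm \emph{and} transfers the discarded norm into the measure's mass, so one must be careful that $(P*P')^{\scl}$ (a probability measure with unit-norm support) when convolved again and re-rescaled recovers the same constant $\|P*P'*P''\| = \int\|f\cdot f'\cdot f''\|\,P(\d f)P'(\d f')P''(\d f'')$ as the other association order; the cleanest route is probably to bypass $\cdot^{\scl}$ entirely in the intermediate steps — work with the unnormalised $P*P'*P''$, note that both $(P\star P')\star P''$ and $P\star(P'\star P'')$ are the unique normalised canonical constraint equivalent to it, and invoke uniqueness — rather than tracking the rescaling explicitly.
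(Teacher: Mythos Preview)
Your overall architecture matches the paper's: closure via Theorem~\ref{thm:atomicFusion}, commutativity from the commutativity of $*$ on the commutative semigroup $(\boL^{\infty}(\boX),\cdot)$, associativity by reducing both bracketings to the triple product, and $\delta_{\one}$ as identity. The paper in fact omits an explicit commutativity argument, so your treatment there is cleaner.

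There is, however, a genuine error in your first associativity route. Neither of the two ``clean'' facts you invoke is true. First, $\cdot^{\scl}$ does \emph{not} distribute over the convolution $*$ on $(\boL^{\infty}(\boX),\cdot)$: the paper's distributivity statement is $(M\rtimes M')^{\scl}=M^{\scl}\rtimes M'^{\scl}$ for the \emph{product measure} $\rtimes$, not for the convolution built on pointwise multiplication. Concretely, $(f\cdot f')^{\scl}=(f\cdot f')/\|f\cdot f'\|$ while $f^{\scl}\cdot f'^{\scl}=(f\cdot f')/(\|f\|\,\|f'\|)$, and these differ whenever $\|f\cdot f'\|<\|f\|\,\|f'\|$. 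For the same reason, $\|\cdot\|$ is not multiplicative under $*$: in general $\|M*M'\|=\int\|f\cdot f'\|\,M(\d f)M'(\d f')\le \|M\|\,\|M'\|$ with strict inequality possible. So that path collapses.

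Your fallback ``bypass'' route is the correct one and is exactly what the paper does, but it has to be carried out at the level of individual functions rather than by an abstract uniqueness appeal. The two identities needed are, with $\hat f=(f\cdot f')/\|f\cdot f'\|$,
\eqns{
\|f\cdot f'\|\,\|\hat f\cdot f''\|=\|f\cdot f'\cdot f''\|
\AND
\dfrac{\hat f\cdot f''}{\|\hat f\cdot f''\|}=\dfrac{f\cdot f'\cdot f''}{\|f\cdot f'\cdot f''\|},
}
which show simultaneously that the mass factors telescope and that the rescaled integrands coincide; plugging these into the definition of $(P\star P')\star P''$ yields $(P*P'*P'')^{\scl}/\|P*P'*P''\|$, and the other bracketing gives the same by symmetry.

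Two smaller points. Your caution about the identity is well placed: $P\star\delta_{\one}=P^{\scl}$, so $\delta_{\one}$ is a strict identity on $\boC^*_1(\boX)$ and only an identity modulo the equivalence of Proposition~\ref{prop:probaConstraintGeneralRemark:rescaling} on all of $\boC_1(\boX)$; the paper glosses over this. And your claim that $\|P*P'\|>0$ ``automatically'' for canonical constraints is false: take $P=\delta_{\ind{A}}$ and $P'=\delta_{\ind{B}}$ with $A\cap B=\emptyset$. The operation is genuinely partial, a point the paper also leaves implicit.
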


\begin{proof}
Theorem~\ref{thm:atomicFusion} together with Corollary~\ref{cor:fusion} show that $\star$ is a proper binary operation on $\boC_1(\boX)$ in the sense that it is found to be a relation from $\boC_1(\boX) \times \boC_1(\boX)$ to $\boC_1(\boX)$. In order to prove the result of the theorem, we need to show that $\star$ is also associative, has an identity element in $\boC_1(\boX)$ and is commutative: let $P,P'$ and $P''$ be probabilistic constraints in $\boC_1(\boX)$, then
\begin{enumerate}[label=\itshape\alph*\upshape)]
\item $\star$ is associative: Since the convolution of measures $*$ is known to be associative, we need to show that $((P * P')^{\scl} * P'')^{\scl} = (P * (P' * P'')^{\scl})^{\scl}$ holds. This can be done by verifying that
\eqns{
\| f \cdot f' \| \| \hat{f}  \cdot f'' \| = \| f \cdot f' \cdot f'' \|
}
and
\eqns{
\dfrac{ \hat{f} \cdot f''}{\| \hat{f}  \cdot f'' \|} = \dfrac{f \cdot f' \cdot f''}{\| f \cdot f' \cdot f'' \|}
}
hold with $\hat{f} = \frac{f \cdot f'}{\|f \cdot f'\|}$ for any $f,f',f'' \in \boL^{\infty}(\boX)$.
\item $\star$ has an identity element: consider that $P' = \delta_{\one}$, then for any measurable subset $F \in \calL^{\infty}(\boX)$
\eqns{
P \star P'(F) \propto \int \ind{F}((f \cdot \one)^{\scl}) \| f \| P(\d f) = P(F),
}
with the coefficient of proportionality
\eqns{
\int \|f \cdot \one\| P(\d f) = 1,
}
so that $P \star P' = P$ and $P'$ is the identity element of $(\boC_1(\boX), \star)$.
%
\end{enumerate}
This terminates the proof.
\end{proof}

The absence of inverse element for the operator $\star$ on $\boC_1(\boX)$ is due to the fact that it is not always possible to ``forget'' what has been learnt. We now study the relation between the proposed operation and Dempster's rule of combination in the next example.

\begin{example}
Let $P$ and $P'$ be two canonical probabilistic constraints on $\boX$ of the form
\eqns{
P = \sum_{A \in \calA} m(A) \delta_{\ind{A}} \AND P' = \sum_{A' \in \calA'} m'(A') \delta_{\ind{A'}}
}
for some sets $\calA$ and $\calA'$ of measurable subsets and some functions $m:\calA \to [0,1]$ and $m':\calA' \to [0,1]$. The posterior probabilistic constraint $P \star P'$ on $\boX$ verifies
\eqns{
P \star P'(F) \propto \sum_{A \cap A' \neq \emptyset} \ind{F}(\ind{A \cap A'}) m(A) m'(A')
}
for any $F \in \calL^{\infty}(\boX)$, which can be re-expressed as
\eqns{
P \star P' = \dfrac{1}{\| P * P'\|} \sum_{A \cap A' \neq \emptyset} m(A) m'(A') \delta_{\ind{A \cap A'}},
}
with
\eqns{
\| P * P'\| = \sum_{A \cap A' \neq \emptyset} m(A) m'(A') = 1- \sum_{A \cap A' = \emptyset} m(A)m(A'),
}
so that $P \star P'$ reduces to the result of Dempster's rule of combination between $m$ and $m'$ \citep{Dempster1968,Shafer1986}. 
A discussion about the connections between Dempster's rule of combination and Bayes' theorem applied to second-order probabilities can be found in \cite{Baron1987}.
\end{example}

\subsection{General fusion}

In Section~\ref{ssec:stateSpaceFusion}, it has been assumed that the set $\boX$ on which the probabilistic constraints are defined can be understood as a state space, with the consequence that the system of interest should be represented at a single point of $\boX$. The way elements of $\boL^{\infty}(\boX)$ are fused is highly dependent on this assumption. Yet, a different viewpoint must be considered when introducing probabilistic constraints on more general sets such as $\boM_1(\boX)$ and $\boC_1(\boX)$.

\begin{remark}
In order to have a suitable topology defined on $\boL^0(\boC_1(\boX))$, a reference measure must be introduced. As there is no natural reference measure on $\boC_1(\boX)$, a case-by-case reference measure must be defined, e.g., as an integer-valued measure that singles out a countable number of elements of $\boC_1(\boX)$, or as a parametric family of elements of $\boC_1(\boX)$ which parameter lies in an Euclidean space.
\end{remark}

Let $\boY$ be a topological space equipped with its Borel $\sigma$-algebra and assume there is a given way of combining elements of $\boY$ which is characterised by a potential function $\ell : \boY \times \boY \to [0,1]$ and a surjective mapping $\theta : S \to \boY$, with $S \subseteq \boY \times \boY$, which defines a stochastic kernel $\Phi$ from $\boY \times \boY$ to $\boY$ as
\eqns{
\Phi((y,y'), \cdot) \defeq
\begin{cases*}
\delta_{\theta(y,y')} & if $(y,y') \in S$ \\
\zero & otherwise.
\end{cases*}
}
We also assume that $\ell$ verifies $\ell(y,y') = 0$ for any $(y,y') \in S^c$. Note that the kernel $\Phi$ becomes a Markov kernel when restricted to $S$. When it exists, the probability measure $\hat\bsp$ in $\boM_1(\boY)$ defined as
\eqnl{eq:generalFusionProba}{
\hat\bsp(B) = \dfrac{\bsp\rtimes \bsp'(\ell \Phi(\cdot ,B))}{\bsp\rtimes \bsp'(\ell)},
}
for all $B \in \calB(\boY)$ can be introduced. As in Remark~\ref{rem:stateSpaceFusionBayes}, $\hat\bsp$ can be seen as the projection of the Bayes' posterior corresponding to the prior $\bsp \rtimes \bsp'$ and the likelihood $\ell$.

\begin{example}
If $\boY$ is equal to $\boL^{\infty}(\bscalX)$, with $\bscalX$ the representative state space defined in Section~\ref{ssec:stateSpaceFusion}, then the laws $\bsp$ and $\bsp'$ are probabilistic constraints on $\bscalX$. In this case, the natural choice for $\ell$ and $\theta$ is $\ell(f,f') = \Ssup{f \cdot f'}$ and $\theta(f,f') = (f \cdot f')^{\scl}$ for any $f, f' \in \boL^{\infty}(\bscalX)$. The probability measure $\hat\bsp$ then takes the form
\eqns{
\hat\bsp(F) \propto \int \ind{F}((f \cdot f')^{\scl}) \Ssup{f\cdot f'} \bsp(\d f) \bsp'(\d f'),
}
for all $F \in \calL^{\infty}(\boY)$, that is $\hat\bsp = \bsp \star \bsp'$.
\end{example}

We assume that there is a reference measure on $\boY$ which enables probability measures to be defined on $\boL^{\infty}(\boY)$. The objective is to extend the fusion of probabilistic constraints to the case where $\boY$ is an arbitrary set.

\begin{theorem}
\label{thm:fuse2ndOrder}
Let $\bsP$ and $\bsP'$ be probabilistic constraints on $\boY$ assumed to represent the same physical system, then the probabilistic constraint resulting from data assimilation is characterised by the following convolution of measures on $(\boL^{\infty}(\bscalX),\fpstar)$:
\eqnl{thm:fuse2ndOrder:eq:pstar}{
\bsP \star \bsP' = \dfrac{(\bsP * \bsP')^{\scl}}{\| \bsP * \bsP' \|},
}
where the binary operation $\fpstar$ on $\boL^{\infty}(\boY)$ characterised by
\eqnl{thm:fuse2ndOrder:eq:star}{
f \fpstar f' : \hat{y} \mapsto \sup_{(y,y') \in \theta^{-1}[\{\hat{y}\}]} \ell(y,y') f(y) f'(y')
}
is assumed to be associative.
\end{theorem}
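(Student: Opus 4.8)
The plan is to follow the proof of Theorem~\ref{thm:atomicFusion} essentially verbatim, with the diagonal $\Delta\subseteq\bscalX\times\bscalX$ and the projection $(x,x)\mapsto x$ replaced by the set $S$ and the combination map $\theta\colon S\to\boY$, the conditioning on $\Delta$ replaced by the $\ell$-weighted kernel $\Phi$ (i.e.\ by the general fusion formula \eqref{eq:generalFusionProba}), and the pointwise product on $\boL^{\infty}(\bscalX)$ replaced by $\fpstar$. The assumption that $\fpstar$ is associative, together with the standing topological and measurability hypotheses on $\boL^{\infty}(\boY)$, makes $(\boL^{\infty}(\boY),\fpstar)$ a topological semigroup, so the convolution $\bsP*\bsP'$ appearing in \eqref{thm:fuse2ndOrder:eq:pstar} is a well-defined measure and is associative by the corollary of Section~\ref{ssec:convolution}.

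The first substantive observation is that $f\fpstar f'$ is exactly the pushforward of the separable function $f\rtimes f'\in\boL^{\infty}(\boY\times\boY)$ along $\theta$, weighted by $\ell$: indeed \eqref{thm:fuse2ndOrder:eq:star} reads $(f\fpstar f')(\hat y)=\sup_{(y,y')\in\theta^{-1}[\{\hat y\}]}\ell(y,y')(f\rtimes f')(y,y')$. Because $\theta$ is surjective onto $\boY$ and $\ell$ vanishes off $S$, this yields
\eqns{
\Ssup{\ind{B}\cdot(f\fpstar f')}=\Ssup{\ind{\theta^{-1}[B]}\cdot\ell\cdot(f\rtimes f')}\ET\Ssup{f\fpstar f'}=\Ssup{\ell\cdot(f\rtimes f')}
}
for every $B\in\calB(\boY)$, and hence $\Ssup{\ind{B}\cdot(f\fpstar f')}=\Ssup{\ind{B}\cdot(f\fpstar f')^{\scl}}\,\Ssup{f\fpstar f'}$ by the definition of the rescaling $\cdot^{\scl}$.

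I would then take random variables $\bm{Y}\in\boL^0(\Omega,\boY)$ and $\bm{Y}'\in\boL^0(\Omega',\boY)$ with laws $\bsp,\bsp'$ dominated by $\bsP,\bsP'$, form $\bsp\rtimes\bsp'$ on $\boY\times\boY$, and consider the fused measure $\hat\bsp$ of \eqref{eq:generalFusionProba}. Its numerator rewrites as $\bsp\rtimes\bsp'(\ell\Phi(\cdot,B))=\int_{\theta^{-1}[B]}\ell\,\d(\bsp\rtimes\bsp')$, and the aim is to bound it by $\int\Ssup{\ind{\theta^{-1}[B]}\cdot\ell\cdot(f\rtimes f')}\bsP(\d f)\bsP'(\d f')$, which by the identities above equals $\int\Ssup{\ind{B}\cdot(f\fpstar f')^{\scl}}\,\Ssup{f\fpstar f'}\bsP(\d f)\bsP'(\d f')=\int\Ssup{\ind{B}\cdot\hat f}\,(\bsP*\bsP')^{\scl}(\d\hat f)$, the last step being the change of variables of Proposition~\ref{prop:probaConstraintGeneralRemark:rescaling} applied to $\bsP*\bsP'$. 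This would show that any admissible fused law is dominated by $(\bsP*\bsP')^{\scl}$, whose total mass is $\int\Ssup{f\fpstar f'}\bsP(\d f)\bsP'(\d f')=\|\bsP*\bsP'\|$; exactly as in Theorem~\ref{thm:atomicFusion}, since at least one of $\bsP,\bsP'$ is a primary representation of uncertainty it is this dominating measure that is renormalised, and dividing by $\|\bsP*\bsP'\|\neq0$ yields \eqref{thm:fuse2ndOrder:eq:pstar}, which is moreover a probability measure on $\boL^{\infty}(\boY)$ by construction.

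The step I expect to be the main obstacle is the inequality $\int_{\theta^{-1}[B]}\ell\,\d(\bsp\rtimes\bsp')\leq\int\Ssup{\ind{\theta^{-1}[B]}\cdot\ell\cdot(f\rtimes f')}\bsP(\d f)\bsP'(\d f')$. In Theorem~\ref{thm:atomicFusion} the potential is $\ell=\ind{\Delta}$, so $\ell\Phi(\cdot,B)$ is literally the indicator of $\theta^{-1}[B]=\Delta(B)$ and the inequality is nothing but Proposition~\ref{prop:probaConstraintGeneralRemark:joint} applied to a single set. For a potential genuinely valued in $[0,1]$ one is integrating a non-indicator function against $\bsp\rtimes\bsp'$; slicing $\ell$ through its super-level sets and applying Proposition~\ref{prop:probaConstraintGeneralRemark:joint} to each $\theta^{-1}[B]\cap\{\ell>t\}$ bounds the left-hand side by $\int\big(\int_0^1\Ssup{\ind{\theta^{-1}[B]\cap\{\ell>t\}}\cdot(f\rtimes f')}\,\d t\big)\bsP(\d f)\bsP'(\d f')$, and the difficulty is that this inner integral can strictly exceed $\Ssup{\ind{\theta^{-1}[B]}\cdot\ell\cdot(f\rtimes f')}$. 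Closing this gap — by a finer argument or by an additional regularity hypothesis on $\ell$ — together with checking the regularity of $\fpstar$ that is assumed in the statement, is where the genuine work is concentrated; the remainder is a transcription of the proof of Theorem~\ref{thm:atomicFusion}.
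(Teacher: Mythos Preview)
Your approach is essentially identical to the paper's: start from dominated laws $\bsp,\bsp'$, form the pre-normalised fused measure $\bsm(B)=\bsp\rtimes\bsp'(\ell\,\Phi(\cdot,B))$, bound it by $\int\sup_{(y,y'):\theta(y,y')\in B}\ell(y,y')f(y)f'(y')\,\bsP(\d f)\bsP'(\d f')$, recognise the right-hand side as the outer measure induced by $(\bsP*\bsP')^{\scl}$, and normalise by $\|\bsP*\bsP'\|$. The identities you record relating $\Ssup{\ind{B}\cdot(f\fpstar f')}$ to $\Ssup{\ind{\theta^{-1}[B]}\cdot\ell\cdot(f\rtimes f')}$ are exactly what the paper uses implicitly when it rewrites the bound as a convolution on $(\boL^{\infty}(\boY),\fpstar)$.

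The gap you flag---passing from domination of \emph{sets} by $\bsP,\bsP'$ to domination of the $\ell$-weighted \emph{integral}---is real, and the paper does not close it. The paper's proof simply writes ``and we find that'' before the inequality
\eqns{
\bsm(B)\leq\int\sup_{(y,y')\sst\theta(y,y')\in B}\big(\ell(y,y')f(y)f'(y')\big)\,\bsP(\d f)\bsP'(\d f')
}
and proceeds directly to normalisation. Your level-set analysis is correct: Definition~\ref{def:constraint} and Proposition~\ref{prop:probaConstraintGeneralRemark:joint} only give $\bsp\rtimes\bsp'(A)\leq\int\Ssup{\ind{A}\cdot(f\rtimes f')}\,\bsP(\d f)\bsP'(\d f')$ for measurable $A$, and there is no lemma in the paper extending this to a functional bound of the form $\int g\,\d(\bsp\rtimes\bsp')\leq\int\Ssup{g\cdot(f\rtimes f')}\,\bsP(\d f)\bsP'(\d f')$ for a general potential $g\in[0,1]$. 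Your counterexample-style observation that $\int_0^1\Ssup{\ind{A\cap\{\ell>t\}}\cdot(f\rtimes f')}\,\d t$ can strictly exceed $\Ssup{\ind{A}\cdot\ell\cdot(f\rtimes f')}$ shows the naive slicing does not suffice. So you have not missed a step that the paper supplies; you have located an inequality the paper asserts rather than proves, and your write-up is in this respect more scrupulous than the original.
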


The level of generality of Theorem~\ref{thm:fuse2ndOrder} is necessary for tackling estimation problems for systems that display hierarchical levels of uncertainty, such as in the modelling of stochastic populations \cite{Houssineau2016}.

\begin{proof}
Let $\bsp$ and $\bsp'$ be probability measures on $\boY$ dominated by $\bsP$ and $\bsP'$ respectively. Equation \eqref{eq:generalFusionProba} shows that the measure $\bsm$ to be dominated is characterised by
\eqns{
\frall{\forall B \in \calB(\boY)} \bsm(B) = \bsp\rtimes \bsp'(\ell\, \Phi(\cdot ,B)),
}
and we find that
\eqns{
\bsm(B) \leq \int \sup_{(y,y') \sst \theta(y,y') \in B}\big( \ell(y,y') f(y) f'(y') \big) \bsP(\d f) \bsP'(\d f')
}
for all $B \in \calB(\boY)$. In order to express the right hand side of the previous inequality as a probabilistic constraint, the argument of the supremum has to be normalised, and we find that
\eqnsa{
\bsP \star \bsP'(F) & = \dfrac{1}{C} \int \ind{F}((f \fpstar f')^{\scl}) \| f \fpstar f' \| \bsP(\d f) \bsP'(\d f') \\
& \propto (\bsP * \bsP')^{\scl}(F)
}
for all $F \in \calL^{\infty}(\boY)$, where the operation~$\fpstar$ is defined as in the statement of the theorem. The constant $C$ is found to be
\eqns{
C = \int \| f \fpstar f' \| \bsP(\d f) \bsP'(\d f') = \| \bsP * \bsP' \|,
}
thus proving the result of the theorem.
\end{proof}

For the posterior probabilistic constraint of Theorem~\ref{thm:fuse2ndOrder} to be well defined, the mappings $\theta$ and $\ell$ have to have sufficient properties for $(\boL^{\infty}(\boY), \fpstar)$ to be a semigroup. Informally, the mapping $\theta$ can be loosely seen as a binary operation which has to be associative for $\fpstar$ to be associative. More rigorously, we can define an extension $\bar\boY \defeq \boY \cup \{\vphi\}$ of the set $\boY$ by an isolated point $\vphi$ and extend $\theta$ and $\ell$ as follows: $\theta(y,y') = \vphi$ for any $(y,y') \notin S$ and $\ell(y,y') = 0$ if $(y,y') \notin \boY \times \boY$. With these notations, we can formulate the following proposition.

\begin{proposition}
The binary operation $\fpstar$ is associative if $\theta$ is associative and if
\eqns{
\ell(y,y') \ell(\theta(y,y'),y'') = \ell(y,\theta(y',y'')) \ell(y',y'')
}
holds for any $y,y',y'' \in \boY$.
\end{proposition}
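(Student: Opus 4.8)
The plan is to show that the associativity of $\fpstar$ follows from unwinding the definition \eqref{thm:fuse2ndOrder:eq:star} twice and matching the two orders of composition term by term. First I would fix $f,f',f'' \in \boL^{\infty}(\boY)$ and an arbitrary point $\hat{y} \in \bar\boY$ and write out $\big((f \fpstar f') \fpstar f''\big)(\hat{y})$ explicitly: it is the supremum over all pairs $(z,y'')$ with $\theta(z,y'') = \hat{y}$ of $\ell(z,y'')\,(f\fpstar f')(z)\,f''(y'')$, and then $(f\fpstar f')(z)$ is itself a supremum over pairs $(y,y')$ with $\theta(y,y') = z$ of $\ell(y,y')f(y)f'(y')$. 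Since the outer supremum is over a nonnegative quantity multiplied by $(f\fpstar f')(z)\ge 0$, the two suprema can be merged into a single supremum over all triples $(y,y',y'')$ satisfying $\theta(\theta(y,y'),y'') = \hat{y}$, of the product $\ell(\theta(y,y'),y'')\,\ell(y,y')\,f(y)f'(y')f''(y'')$.

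Next I would perform the symmetric computation for $\big(f \fpstar (f' \fpstar f'')\big)(\hat{y})$, obtaining a single supremum over all triples $(y,y',y'')$ with $\theta(y,\theta(y',y'')) = \hat{y}$ of $\ell(y,\theta(y',y''))\,\ell(y',y'')\,f(y)f'(y')f''(y'')$. The hypothesis that $\theta$ is associative — interpreted on the extended set $\bar\boY$, so that $\theta(\theta(y,y'),y'') = \theta(y,\theta(y',y''))$ for all $y,y',y''$, with both sides equal to $\vphi$ whenever any intermediate composition falls outside $S$ — guarantees that the two index sets of triples coincide. The hypothesis on $\ell$, namely $\ell(y,y')\,\ell(\theta(y,y'),y'') = \ell(y,\theta(y',y''))\,\ell(y',y'')$, guarantees that the expressions being supremised agree on this common index set. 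Hence the two suprema are equal at every $\hat{y}$, which is exactly $(f\fpstar f')\fpstar f'' = f \fpstar (f' \fpstar f'')$.

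A small amount of care is needed on two fronts, and I expect the bookkeeping around the extended point $\vphi$ to be the only mild obstacle. First, one should check that the convention $\ell = 0$ off $\boY\times\boY$ and $\theta = \vphi$ off $S$ makes the merged-supremum manipulation legitimate even when some intermediate value $\theta(y,y')$ equals $\vphi$: in that case the corresponding term carries a factor $\ell(\vphi,y'') = 0$, so it contributes nothing to either side and can be harmlessly included or excluded. Second, for the merging of nested suprema one uses the elementary identity $\sup_z\big(g(z)\sup_{w\in W(z)} h(z,w)\big) = \sup_{z,\,w\in W(z)} g(z)h(z,w)$ valid for nonnegative $g,h$, together with the fact that all functions involved lie in $\boL^{\infty}(\boY)$ and are therefore bounded, so the suprema are finite and no $\infty\cdot 0$ issues arise. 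Once these points are dispatched the argument is purely formal, and no measurability or topological input is required beyond what is already assumed for $\fpstar$ to land in $\boL^{\infty}(\bar\boY)$.
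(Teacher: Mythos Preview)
Your proposal is correct and follows essentially the same approach as the paper: unwind the nested suprema into a single supremum over triples $(y,y',y'')$, then match the index sets via associativity of $\theta$ and the integrands via the cocycle-type condition on $\ell$. Your additional remarks on the $\vphi$ bookkeeping and the legitimacy of merging nonnegative nested suprema make explicit what the paper leaves implicit, but the argument is otherwise identical.
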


\begin{proof}
We have to show that $(f \fpstar f') \fpstar f'' = f \fpstar (f' \fpstar f'')$ for any $f,f',f'' \in \boL^{\infty}(\boY)$ under the conditions given in the proposition. It holds that for any $\tilde{y} \in \boY$
\eqnsml{
((f \fpstar f') \fpstar f'')(\tilde{y}) = \\
\sup_{\substack{(\hat{y},y'') \in \theta^{-1}[\{\tilde{y}\}] \\ (y,y') \in \theta^{-1}[\{\hat{y}\}] }} \ell(y,y') \ell(\hat{y},y'') f(y) f'(y') f''(y''),
}
which can be expressed as
\eqnsml{
((f \fpstar f') \fpstar f'')(\tilde{y}) = \\
\sup_{(y,y',y'') \sst \theta(\theta(y,y'),y'') = \tilde{y}} \ell(y,y') \ell(\theta(y,y'),y'') f(y) f'(y') f''(y'').
}
Obtaining the equivalent expression for $f \fpstar (f' \fpstar f'')$ confirms that $\fpstar$ is associative if
\eqnsa{
\theta(\theta(y,y'),y'') & = \theta(y,\theta(y',y'')) \\
\ell(y,y') \ell(\theta(y,y'),y'') & = \ell(y,\theta(y',y'')) \ell(y',y'')
}
holds for any $y,y',y'' \in \boY$, the first line being equivalent to the associativity of $\theta$ when seen as a binary operation.
\end{proof}

Notice that when the mapping $\theta$ is bijective, the function $f \fpstar f'$ is also characterised by the relation
\eqnl{eq:bijTheta}{
f \fpstar f'(\theta(y,y')) = \ell(y,y')f(y)f'(y')
}
for all $(y,y') \in S$. Indeed, the supremum in \eqref{thm:fuse2ndOrder:eq:star} is taken over $\theta^{-1}[\{\hat{y}\}]$ which is a singleton when $\theta$ is bijective. The following example makes the connection between state-space fusion and the more general fusion operation introduced in Theorem~\ref{thm:fuse2ndOrder}.

\begin{example}
\label{ex:generalFusion:ex1} If $\boY$ is equal to the state space $\boX$ defined in Section~\ref{ssec:stateSpaceFusion}, then the natural way to set $\ell$ and $\theta$ is to take $\ell(x,x') = \ind{\{x\}}(x')$ for all $x,x' \in \boX$ and to define $\theta$ on the diagonal of $\boX \times \boX$ only, by $\theta(x,x) = x$ for all $x \in \boX$. Since $\theta$ is bijective in this case, we can use \eqref{eq:bijTheta} to find that
\eqns{
f \fpstar f' : x \mapsto \ell(x,x) f(x) f'(x) = f(x)f'(x).
}
In other words, it holds that $f \fpstar f' = f \cdot f'$ and the result of Section~\ref{ssec:stateSpaceFusion} about fusion for state spaces is recovered. This confirms that the fusion operation introduced in this section is not different from the one of Section~\ref{ssec:stateSpaceFusion}, it is instead a more general formulation of the same operation.
\end{example}

\begin{example} If $\boY = \boC_1(\boX)$ where $\boX$ is the state space defined in Section~\ref{ssec:stateSpaceFusion}, then the natural way to set $\ell$ and $\theta$ is $\ell(P,P') = \| P * P' \|$ and $\theta(P,P') = P \star P'$ for any probabilistic constraints $P$ and $P'$ in $\boC_1(\boX)$. In this case, we find that
\eqns{
f \fpstar f' : \hat{P} \mapsto \sup_{(P,P') \sst P \star P' = \hat{P}} \|P * P'\| f(P) f'(P').
}
The binary operation $\fpstar$ is associative since it indeed holds that
\eqnsa{
(P \star P') \star P'' & = P \star (P' \star P'') \\
\| P * P'\| \| ( P \star P') * P'' \| & = \| P * (P' \star P'') \| \| P' * P''\|
}
for any $P,P',P'' \in \boC_1(\boX)$. The set $\boC_1(\boX)$ is one of the useful examples of sets on which the fusion operation takes a general form since two probabilistic constraints do not have to be equal to represent the same individual. Also, the mapping $\theta$ is surjective but not bijective in general for probabilistic constraints, as opposed to the case of Example~\ref{ex:generalFusion:ex1}. This level of generality was required in \cite[Chapt.\ 2]{Houssineau2015} when deriving a principled solution to the problem of data association for multi-object dynamical systems since these systems are best represented by hierarchical level of uncertainty, as already identified in \cite{Pace2013,DelMoral2015}.
\end{example}

\section*{Conclusion}

Supremum-based outer measures have demonstrated their ability to represent uncertainty in a flexible way and to comply with intuitively appealing operations such as pullback and data assimilation. Future work includes the generalisation of the proposed type of outer measure to product spaces, where correlations between the different components of the product space can take a more involved form.

\bibliography{Thesis}

\end{document}